\DeclareFontShape{T1}{lmr}{b}{sc}{<->ssub*cmr/bx/sc}{}
\DeclareFontShape{T1}{lmr}{bx}{sc}{<->ssub*cmr/bx/sc}{}
\newcommand\ourproto{{{Tenderbake}}\xspace}
\newcommand{\set}[1]{\{#1\}}
\newcommand{\Nat}{\mathbb{N}}
\newcommand{\pNat}{\mathbb{N}^*}
\lstdefinelanguage{pseudocode}{
  numbers = left,
  firstnumber = auto,
  numberblanklines = false,
  countblanklines = false,
  morekeywords = {if, then, else, while, for, do, break, with, fun, proc, return, true, false, and, or, not, goto},
  morekeywords = {let, in, match, with, noop, assert, range, len},
  morekeywords = {phase, broadcast, pullChain, sleep, upon, of, request}, 
  columns=fullflexible,
  sensitive=true,
  commentstyle = \itshape\color{blue},
  morecomment={[l]\#},
  morecomment={[l]//},
  mathescape=true,
  basicstyle=\scriptsize,
  identifierstyle={\rmfamily},
  escapechar = \&,
  literate={++}{\texttt{++}}2,
  literate={:=}{{$\leftarrow\,$}}2,
  literate={->}{{$\rightarrow\;$}}2,
  captionpos=b,
  xleftmargin=20pt
}
\newcommand{\ls}{\lstinline[basicstyle=\normalsize]}
\newcommand{\ProposeM}{\ensuremath{\mathsf{Propose}}\xspace}
\newcommand{\PreendorseM}{\ensuremath{\mathsf{Preendorse}}\xspace}
\newcommand{\EndorseM}{\ensuremath{\mathsf{Endorse}}\xspace}
\newcommand{\PreendorsementsM}{\ensuremath{\mathsf{Preendorsements}}\xspace}
\newcommand{\ProposePL}{\ensuremath{\mathsf{PROPOSE}}\xspace}
\newcommand{\PreendorsePL}{\ensuremath{\mathsf{PREENDORSE}}\xspace}
\newcommand{\EndorsePL}{\ensuremath{\mathsf{ENDORSE}}\xspace}
\newcommand{\ProposeP}{\ProposePL}
\newcommand{\PreendorseP}{\PreendorsePL}
\newcommand{\EndorseP}{\EndorsePL}
\newcommand{\Phases}{\mathit{Phases}}
\newcommand{\block}[2]{\langle #1;#2\rangle} 
\newcommand{\roundDur}{\Delta}
\newcommand{\phaseDur}{\Delta'}
\newcommand{\var}[1]{\mathit{#1}} 
\newcommand{\validBlock}{\var{endorsableValue}}
\newcommand{\validRound}{\var{endorsableRound}}
\newcommand{\validPreendorsements}{\var{preendorsementQC}}
\newcommand{\lockedBlock}{\var{lockedValue}}
\newcommand{\lockedRound}{\var{lockedRound}}
\newcommand{\blockchain}{\var{blockchain}}
\newcommand{\msgs}{\var{messages}}
\newcommand{\msg}{\var{msg}}
\newcommand{\pc}{\var{proposalOrCertificate}}
\newcommand{\headQC}{\var{headCertificate}}
\newcommand{\dt}{\var{offset}}
\newcommand{\level}{\var{level}}
\newcommand{\round}{\var{round}}
\newcommand{\phase}{\var{phase}}
\newcommand{\offset}{\var{offset}}
\newcommand{\levelOffset}{\var{levelOffset}}
\newcommand{\rc}{\var{rt}}
\newcommand{\cp}{\var{cp}}
\newcommand{\fn}[1]{\ensuremath{\mathsf{#1}}} 
\newcommand{\pullChain}{\fn{pullChain}}
\newcommand{\pset}{\fn{preendorsements}}
\newcommand{\vset}{\fn{endorsements}}
\newcommand{\getBlock}{\fn{newValue}}
\newcommand{\isValid}{\fn{isValidValue}}
\newcommand{\waitMsg}{\fn{handleEvents}}
\newcommand{\selectProcesses}{\fn{committee}}
\newcommand{\selectProcessesL}{\fn{committeeAtLevel}}
\newcommand{\validChain}{\fn{validChain}}
\newcommand{\betterChain}{\fn{betterHead}}
\newcommand{\decisionRound}{\fn{round}}
\newcommand{\handleConsensusMessage}{\fn{handleConsensusMessage}}
\newcommand{\getNextRoundAndPhase}{\fn{getNextPhase}}
\newcommand{\hash}{\fn{hash}}
\newcommand{\proposal}{\fn{proposal}}
\newcommand{\pendos}{\fn{preendos}}
\newcommand{\legitimate}{\fn{isLegitimateValue}}
\newcommand{\consistent}{\fn{isConsistentValue}}
\newcommand{\validMessage}{\fn{isValidMessage}}
\newcommand{\runInstance}{\fn{runConsensusInstance}}
\newcommand{\startInstance}{\fn{startConsensusInstance}}
\newcommand{\initInstance}{\fn{initConsensusInstance}}
\newcommand{\filterMsgs}{\fn{filterMessages}}
\newcommand{\stopHandler}{\fn{stopEventHandler}}
\newcommand{\roundFromTime}{\Delta_{\mathit{inv}}}
\newcommand{\getDecision}{\fn{getDecision}}
\newcommand{\NewChain}{\fn{NewChain}}
\newcommand{\NewMessage}{\fn{NewMessage}}
\newcommand{\mbot}{\bot}
\newif\ifdraft
\newcommand\redout{\bgroup\markoverwith
	{\textcolor{red}{\rule[0.5ex]{2pt}{0.8pt}}}\ULon}
\newcommand{\correction}[2]{\redout{#1} {\bf #2}}
\definecolor{g0}{HTML}{339900}
\definecolor{o0}{HTML}{FF6600}
\definecolor{b0}{HTML}{0033CC}
\newcommand\q[1]{\textcolor{g0}{Q: #1}}
\newcommand\obs[1]{\textcolor{o0}{Obs: #1}}
\newcommand\todo[1]{\textcolor{red}{TODO: #1}}
\newcommand\fix[1]{\textcolor{o0}{\footnote{\textcolor{o0}{#1}}}}
\newcommand\new[1]{\textcolor{blue}{NEW: #1}}
\newcommand{\correction}[2]{}
\newcommand\q[1]{}
\newcommand\obs[1]{}
\newcommand\todo[1]{}
\newcommand\fix[1]{}
\newcommand\new[1]{}
\tikzset{join/.code=\tikzset{after node path={%
\ifx\tikzchainprevious\pgfutil@empty\else(\tikzchainprevious)%
edge[every join]#1(\tikzchaincurrent)\fi}}}
\tikzset{>=stealth',every on chain/.append style={join},
         every join/.style={-}}
\tikzstyle{labeled}=[execute at begin node=$\scriptstyle,
\begin{document}

\title{{\bf{\ourproto}}\ --- A Solution to Dynamic Repeated Consensus for Blockchains}

\author{L\u acr\u amioara Aștef\u anoaei\inst{1},
  Pierre Chambart\inst{1},
  Antonella Del Pozzo\inst{2},
  Thibault Rieutord\inst{2,3},
  Sara Tucci\inst{2}, and
  Eugen Z\u alinescu\inst{1}}

\authorrunning{L.~Aștef\u anoaei, P.~Chambart, A.~Del Pozzo,
  T.~Rieutord, S.~Tucci, E.~Z\u alinescu}

\institute{Nomadic Labs, Paris, France \and CEA, List, F-91120, Palaiseau, France \and Université Paris-Saclay, Gif-sur-Yvette, France}
\maketitle
\begin{abstract}
  First-generation blockchains provide probabilistic finality: a block
can be revoked, albeit the probability decreases as the block
``sinks'' deeper into the chain. Recent proposals revisited
committee-based BFT consensus to provide deterministic finality: as
soon as a block is validated, it is never revoked. A distinguishing
characteristic of these second-generation blockchains over classical
BFT protocols is that committees change over time as the participation
and the blockchain state evolve. In this paper, we push forward in
this direction by proposing a formalization of the Dynamic Repeated
Consensus problem and by providing generic procedures to solve it in
the context of blockchains.


Our approach is modular in that one can plug in different
synchronizers and single-shot consensus instances. To offer a complete
solution,
we provide a concrete instantiation, called \ourproto, and present a blockchain
synchronizer and a single-shot consensus algorithm, working in a Byzantine and
partially synchronous system model with eventually synchronous
clocks. In contrast to recent proposals, our methodology is driven by
the need to \emph{bound} the message buffers. This is essential in
preventing spamming and run-time memory errors. Moreover, \ourproto
processes can synchronize with each other \emph{without} exchanging
messages, leveraging instead the information stored in the blockchain.

\end{abstract}

\section{Introduction}

Besides raising public interest, blockchains have also recently gained
traction in the scientific community. The underlying technology
combines advances in several domains, most notably from distributed
computing, cryptography, and economics, in order to provide novel
solutions for achieving trust in decentralized and dynamic
environments.

Our work has been initially motivated by
Tezos~\cite{tezos-white-paper,tezos-tutorial}, a blockchain platform
that distinguishes itself through its self-amendment mechanism:
protocol changes are proposed and voted upon. This feature makes
Tezos especially appealing as a testbed for experimenting with
different consensus algorithms to understand their strengths and
suitability in the blockchain context. Tezos relies upon a consensus
mechanism build on top of a liquid proof-of-stake system, meaning that
block production and voting rights are given
to participants in proportion to their stake and that participants can
delegate their rights to other stake-holders. As Nakamoto
consensus~\cite{bitcoin,GarayKL15}, Tezos' current consensus
algorithm~\cite{EmmyPlus_blog_post} achieves only probabilistic
finality assuming an attacker with at most half of the total stake,
and relying on a synchrony assumption.

The initial goal of this work was to strengthen the resilience of
Tezos through a BFT consensus protocol to achieve deterministic
finality while relaxing the synchrony assumption. We had two general
requirements that we found were missing in the existing BFT consensus
protocols. First, for security reasons, message buffers need to be
bounded: assuming unbounded buffers may lead to memory errors, which
can be caused either accidentally or maliciously, through spamming
for instance. Second, as previously observed~\cite{TendermintCorrectness},
plugging a classical BFT consensus protocol in a blockchain setting
with a proof-of-stake boils down to solve a form of repeated
consensus~\cite{delporte2008finite}, where each consensus instance
(i) produces a block, i.e., the decided value, and (ii) runs among
a committee of processes which are selected based on their stake.
To be applicable to open blockchains, committees need to be dynamic
and change frequently. Allowing frequent committee changes is
fundamental in blockchains for mainly two reasons: (i) it is not
desirable to let a committee be responsible for producing blocks
for too long, for neither fairness nor security; (ii) the stake of
participants may change frequently.

\smallskip

\noindent\textbf{Dynamic Repeated Consensus.} Typically, repeated
consensus is solved with state machine replication (SMR)
implementations. We, instead, propose to use a novel formalism,
dynamic repeated consensus (DRC) to take into account that, in the
context of \emph{open} blockchains, participants in consensus
change. To this end, we propose that the selection of participants is
based upon information readily available in the blockchains.



To solve DRC, we follow the methodology initially presented in \cite{DLS88} and
revived more recently in \cite{hotstuff,lumiere,Keidar20}: we decouple the logic
for synchronizing the processes in consensus instances from the consensus logic
itself. Thus our solution uses two main generic ingredients: a synchronizer and a
single-shot consensus skeleton. Our approach is modular in that one can plug in
different synchronizers and single-shot consensus algorithms. Our solution works
in a partially synchronous model where the bound on the message delay is
\emph{unknown}, and the communication is \emph{lossy} before the global
stabilization time (GST). We note that losing messages is a consequence of
processes having bounded memory: if a message is received when the buffers are
full, then it is dropped.



\smallskip

\noindent\textbf{Blockchain-based Synchronizer.}  The need for and the
benefits of decoupling the synchronizer from the consensus logic have
already been pointed out in
\cite{hotstuff,lumiere,Keidar20,Gotsman20}. Indeed, such separation of
concerns allows reusability and simpler proofs. We continue this line
of work and propose a synchronizer for DRC which does not exchange
messages. Instead, it relies upon local clocks while leveraging
information already stored in the blockchain. Our solution allows
buffers to be bounded and guarantees that correct processes in the
synchronous period are always in the same round, except for
negligeable periods of time due to clock drifts. Thus processes can
discard all the messages not associated with their current or next
round. This is in contrast with existing solutions, which, in
principle, need to store messages for an unbounded number of rounds.


\smallskip

\noindent\textbf{Consensus algorithm.} To complete our DRC solution, \ourproto,
we also present a single-shot consensus algorithm. Single-shot \ourproto is
inspired by Tendermint~\cite{tendermintv2,dissecting}, in turn inspired by
PBFT~\cite{pbft} and DLS~\cite{DLS88}. We improve Tendermint in two aspects: i)
we remove the reliable broadcast requirement, and ii) we provide faster
termination. Tendermint terminates once processes synchronize in the same round
after GST, in the worst case, in $n$ rounds, where $n$ is the size of the
committee. Single-shot \ourproto terminates in $f+2$ rounds, where $f$ is the
upper bound on the number of Byzantine processes. \ourproto departs from its
closest relatives Tendermint and HotStuff \cite{hotstuff} in that it is driven
by a bounded-buffers design leveraging a synchronizer that paces protocol phases
on timeouts only. However, the price for this is that \ourproto is not
optimistic responsive as HotStuff, which makes progress at the speed of the
network and terminates in $f+1$ rounds, at the cost of an additional phase. As a
last difference, we note that, contrary to recent pipelined
algorithms~\cite{hotstuff,pala}, \ourproto lends itself better to open
blockchains. Pipelined algorithms focus more on performance, however pipelining
imposes restrictions on how much and how frequently committees can
change~\cite{pala}.

%
%
%
%


\smallskip
\noindent\textbf{Further related work.}
We are not aware of any existing approach providing a complete, generic DRC
formalization. However, several references exist for particular aspects which we
touch upon. For instance, repeated consensus with bounded buffers has been
studied in
\cite{delporte2008finite,shahmirzadi2009relaxed} but in system models
which assume crash failures only. Working solutions for implementing dynamic
committees are (mostly partially) documented in
\cite{algorand,cosmos,dfinity,ibft2,libraSMR,PS18,smrdyn}.
The differences with respect to the closest relatives of single-shot \ourproto
have been discussed above.

\smallskip

\noindent\textbf{Outline.}  The paper is organized as follows:
Section~\ref{sec:model} defines the system model;
Section~\ref{sec:drc} formalizes the DRC problem and proposes a
generic solution; Section~\ref{sec:sync} proposes a synchronizer
leveraging blockchain's immutability;
Sections~\ref{sec:gssc}\;-\;\ref{sec:algorithm} present the
single-shot consensus skeleton and respectively single-shot \ourproto,
its instantiation; Section~\ref{sec:cc} discusses message complexity
and gives an upper bound on the recovery time after GST;
Section~\ref{sec:conclusion} concludes.
Appendix~\ref{sec:correctness} contains the detailed correctness
proofs of \ourproto.




\section{System Model}\label{sec:model}

We consider a message-passing distributed system composed of a possibly infinite
set~$\Pi$ of processes. Processes have access to digital signing and hashing
algorithms.
We assume that cryptography is perfect: digital signatures cannot be forged, and
there are no hash collisions.
Each process has an associated public/private key pair for signing and processes
can be identified by their public keys.

\smallskip

\noindent\textbf{Execution model.} Processes repeatedly run consensus instances
to decide \emph{output} values. New output values are appended to a \emph{chain}
that processes maintain locally. Consensus instances run in \emph{phases}. The
execution of a phase consists in broadcasting some messages (possibly none),
retrieving messages, and updating the process state.  At the end of a phase a
correct process exits the current phase and starts the next phase.  We consider
that message sending and state updating are instantaneous, because their
execution times are negligible in comparison to message transmission delays.
This means that the duration of a phase is given by the amount of time dedicated
to message retrieval.

\smallskip

\noindent\textbf{Partial synchrony.} We assume a partially synchronous system,
where after some unknown time~$\tau$ (the global stabilization time, GST) the
system becomes synchronous and channels reliable, that is, there is a finite
\emph{unknown} bound~$\delta$ on the message transfer delay.
Before~$\tau$ the system is asynchronous and channels are lossy.
%

We assume that processes have access to local clocks and that after $\tau$ these
clocks are loosely synchronized: at any time after~$\tau$, the difference
between the real time and the local clock of a process is bounded by some
constant~$\rho$, which, as $\delta$, is a priori unknown.

We consider that message sending and state updating are instantaneous, because
their execution times are negligible in comparison to message transmission
delays.

\smallskip

\noindent\textbf{Fault model.} Processes can be \textit{correct} or
\textit{faulty}. Correct processes follow the protocol, while faulty ones
exhibit Byzantine behavior by arbitrarily deviating from the protocol.
%
%

\smallskip

\noindent\textbf{Communication primitives.} We assume the presence of two
communication primitives built on top of point-to-point channels, where
exchanged messages are authenticated. The first primitive is a best-effort
broadcast primitive used by processes participating in a consensus instance and
the second is a pull primitive which can be used by any process.

Broadcasting messages is done by invoking the primitive $\fn{broadcast}$. This primitive
provides the following guarantees: (i)~{integrity}, meaning that each message is
delivered at most once and only if some process previously broadcast it;
(ii)~{validity}, meaning that after~$\tau$ if a correct process broadcasts a
message~$m$ at time~$t$, then every correct process receives $m$ by
time $t+\delta$.
For simplicity, we assume that processes also send messages to themselves.
Processes are notified of the reception of a message with a $\NewMessage$
event.

The $\pullChain$ primitive is used by a process to retrieve output
values from other processes. This primitive guarantees that, if
invoked by a process~$p$ at some time~$t>\tau$, then~$p$ will
eventually receive all the output values of correct processes
before~$t$.
%
%
We note that the pull primitive can be implemented in such a way that
the caller does not need to pull all output values, but only the ones
that it misses. Furthermore, output values can be grouped and thus
received as a chain of values.
Processes are notified of the reception of a chain with a $\NewChain$
event.

%



\section{Dynamic Repeated Consensus}
\label{sec:drc}



\subsection{Problem definition}
\label{ssec:def}
%

Originally, repeated consensus was defined as an infinite sequence of consensus
instances executed by the \textit{same} set of processes, with processes having
to agree on an infinitely growing sequence of decision values
\cite{delporte2008finite}. Dynamic repeated consensus, instead, considers that
each consensus instance is executed by a potentially different set of $n$
processes where~$n$ is a parameter of the problem.
More precisely, given the $i$-th consensus instance, only~$n$ processes
$\Pi_i\subseteq \Pi $ participate in the consensus instance proposing values and
deciding a unique value~$v_i$. Processes in $\Pi-\Pi_i$ can only adopt
$v_i$. Therefore output values can be either directly decided or adopted. We
assume that every correct process agrees a priori on a value~$v_0$.

To know the committee, each process has access to a deterministic
selection function $\selectProcesses$ that returns a sequence of
processes based on previous output values.
More precisely, the committee $\Pi_i$ is given by
$\selectProcesses([v_0])$ for~$i \leq k$ and by
$\selectProcesses(\bar{v}_p[..(i-k)])$ for $i > k$, where $k > 0$ is a
problem parameter, $\bar{v}_p$ denotes the sequence of output values
of process~$p$, and $\bar{s}[..j]$ denotes the prefix of length~$j+1$ of
the sequence~$\bar{s}$.
Each process calls $\selectProcesses$ with its own decided values;
however since decided values are agreed upon, $\selectProcesses$
returns the same sequence when called by different correct
processes. We note that the sets~$\Pi_i$ are potentially unrelated to
each other, and any pair of subsequent committees may differ.
However, we assume that in each committee, less than a third of the
members are faulty. For convenience, we consider the worst case:
$n=3f+1$, and each committee contains exactly $f$ faulty processes.
%


Dynamic repeated consensus, as repeated consensus, needs to satisfy
three properties: agreement, validity, and progress. Agreement and
progress have the same formulation for both problems. However, validity needs to
reflect the dynamic aspect of committees. To this end, we define
validity employing two predicates.
The first one is $\legitimate$. When given as input a
value~$v_i$, $\legitimate(v_i)$ returns true if the value has been
proposed by a legitimate process, e.g.,~a process in~$\Pi_i$.
%
The second predicate is $\consistent$. When given as input two
consecutive output values $v_i$, $v_{i-1}$, $\consistent(v_i, v_{i-1})$
returns true if~$v_i$ is consistent with~$v_{i-1}$. This
predicate takes into account the fact that an output value depends on
the previous one, as commonly assumed in blockchains. For instance,
when output values are blocks containing transactions, a valid block
must include the identifier or hash of the previous block, and
transactions must not conflict with those already decided.
%
%
For conciseness, we define $\isValid(v_i,v_{i-1})$ as a predicate
that returns true if both $\legitimate(v_i)$ and
$\consistent(v_i,v_{i-1})$ return true for $i > 0$.
Note that the use of an application-defined predicate for stating
validity already appears in~\cite{TendermintCorrectness,redbelly17}.

An algorithm that solves the \textit{Dynamic Repeated Consensus}
problem must satisfy the following three properties:
\begin{itemize}
\item (\textbf{agreement}) At any time, if $\bar{v}_p$ and $\bar{v}_q$ are the
  sequences of output values of two correct processes~$p$ and~$q$,
  then $\bar{v}_p$ is a prefix of $\bar{v}_q$ or $\bar{v}_q$ is a
  prefix of $\bar{v}_p$.
\item (\textbf{validity}) At any time, if $\bar{v}_p$ is the sequence of output values of a
  correct process~$p$, then the predicate
  $\isValid$($\bar{v}_p[i]$,$\bar{v}_p[i-1]$) is satisfied for any $i>0$.
\item (\textbf{progress}) The sequence of output values of every correct process grows.
\end{itemize}
We use $\bar{s}[i]$ to denote the $(i+1)$-th element of the
sequence~$\bar{s}$.

\subsection{A DRC solution for blockchains}
\label{sec:drcbco}

\subsubsection{Preliminaries.}

A \emph{blockchain} is a sequence of linked blocks.
The \emph{head} of a blockchain is the last block in the sequence. The
block \emph{level} is its position in the sequence, with the first
block having level~$0$. We call this block \emph{genesis}.
A block has a \emph{header} and a \emph{content}. The content
typically consists of a sequence of transactions; it is
application-specific and therefore we do not model it further. The
block header includes the level of the block and the hash of the
previous block, among other fields detailed later.

In a nutshell, the proposed DRC algorithm works as follows. At each level, 
processes run a single-shot consensus algorithm to agree on a
tuple~$(u, h)$, where~$b$ is a block proposed to be appended to the
blockchain, $u$ is the content of~$b$ and~$h$ is the hash of the
predecessor of~$b$.
Therefore, we consider that the output values in $\bar{v}$ from the
DRC definition in Section~\ref{ssec:def} are the agreed upon
tuples~$(u,h)$.

Intuitively, the block content is what needs to be agreed upon at a
given level.
%
%
Thanks to block hashes, the agreement obtained during a single-shot
consensus instance implies agreement on the whole blockchain,
\emph{except for its head}, for which there might not yet be agreement
on the other fields of the header besides the predecessor hash. The
possible ``disagreement'' comes from processes taking a decision at
possibly different times and thus on different proposed blocks which,
however, share the same content. Agreement on the head is obtained
implicitly at the next level. For clarity, we refer to a block as
being \emph{committed} if it is not the head of the blockchain of an
honest process.




In order for processes to validate a chain independently of the
current consensus instance, a certificate is included in the block
header to justify the decision on the previous block. A
\emph{certificate} is a quorum of signatures which serves as a
justification that the content of the predecessor block was agreed
upon by the ``right'' committee.
To effectively check certificates, the public keys of committee members are
stored in the blockchain. For instance, if the blocks up to level~$\ell$ contain
the public keys of the committee for level~$\ell+k$, and the genesis block
contains the public keys of the committees for levels~$1$ to~$k$, then the
signatures in a certificate at a given level can be checked by processes that
are not more than $k$ blocks behind.

\subsubsection{A DRC algorithm.}
\label{sec:drcbc}

Fig.~\ref{alg:drc} presents the pseudocode of a \emph{generic}
procedure to solve DRC in the context of blockchains. It is generic in
that it can run with \emph{any}
single-shot consensus algorithm.

Before going into details, we first enumerate that the variables
maintained by any correct process~$p$. Namely, the state of~$p$
contains:
\begin{itemize}
\item $\blockchain_p$, its local copy of the blockchain;
\item $\ell_p$, the current level at which $p$ runs a consensus
  instance, which equals the length of its blockchain;
\item $h_p$, the hash of the head of $\blockchain_p$, that is, of the
  block at level $\ell_p-1$;
\item $\headQC_p$, the certificate which justifies the head of
  $\blockchain_p$.
\end{itemize}
%
%
In the pseudocode, all these state variables are considered global,
while variables local to a procedure are those that do not have a
subscript.

Next, we refine the answer to $\pullChain$ requests, in that we consider that
the $\pullChain$ primitive retrieves more than just output values.
Concretely, 
when a correct process~$p$ at
level~$\ell_p$ answers a $\pullChain$ request, it returns a tuple
$(\blockchain_p, \pc)$ where $\blockchain_p$ is its local chain and
$\pc$ is either:
\begin{itemize}
\item the block that~$p$ considers as the current proposal at level~$\ell_p$ or,
\item in absence of a proposal, $\headQC_p$. 
\end{itemize}
Here, by \emph{proposal} we mean a proposed block.
%



%
\begin{figure}[t]
\begin{minipage}{0.46\textwidth}
  \begin{lstlisting}[name=Alg,xleftmargin=-.1cm]
proc $\fn{runDRC}()$
  schedule $\fn{onTimeoutPull}()$ to be executed after $I$ $\label{l:sp}$
  $\fn{updateState}([\var{genesis}],\emptyset)$ $\label{line:updateState-first}$
  while true
    $\initInstance()$
    $(\var{chain},\var{certificate})$ := $\runInstance()$ $\label{l:ri}$
    $\fn{updateState}(\var{chain},\var{certificate})$ $\label{line:nextLevel}$


proc $\fn{updateState}(\var{chain}, \var{certificate})$
  # NB: tail of $\color{blue}{\blockchain_p}$ is a prefix of $\color{blue}{\var{chain}}$
  $\blockchain_p$ := $\var{chain}$
  $\headQC_p$ := $\var{certificate}$
  $\ell_p$ := $\fn{length}(\var{chain})$ $\label{line:updateLevel}$
  $h_p$ := $\hash(\blockchain_p[\ell_p-1])$
  \end{lstlisting}
\end{minipage}
\begin{minipage}{0.5\textwidth}
  \begin{lstlisting}[name=Alg]
proc $\fn{onTimeoutPull}()$
  pullChain
  schedule $\fn{onTimeoutPull}()$ to be executed after $I$

proc $\waitMsg()$
  while not $\stopHandler()$ do $\label{line:timer}$
    upon $\NewMessage(\var{msg})$
      $\handleConsensusMessage(\var{msg})$
    upon $\NewChain(\var{chain},\pc)$
      $\var{certificate} := \fn{getCertificate}(\pc)$
      if $\validChain(\var{chain}, \var{certificate})$ then $\label{l:valid}$
        if $\fn{length}(\var{chain}) > \ell_p$ then
          return $(\var{chain}, \var{certificate})$ $\label{l:ret}$
        else if $\fn{length}(\var{chain}) = \ell_p\; \land$
           $\betterChain(\var{chain},\pc)$ then
          $\fn{updateState}(\var{chain}, \var{certificate})$ $\label{line:betterChain}$
    \end{lstlisting}
\end{minipage}
  \caption{DRC entry point and auxiliary procedures.}
  \label{alg:drc}
\end{figure}

We now proceed to describing the entry point of the DRC algorithm, that is, the
procedure~$\fn{runDRC}$ in Fig.~\ref{alg:drc}. We note that processes need not
start DRC at the same time. When executing $\fn{runDRC}$, a process starts by
scheduling calls to $\pullChain$. Then, using $\fn{updateState}$, it initializes
its local variables, namely the state variables already presented and the
variables specific to the single-shot algorithm. We use the function $\hash$ to
compute the hash of some input. The function $\fn{length}$ returns the length of
an input sequence.

After updating its state, the process iteratively runs consensus instances and
once an instance has finished, it updates its state accordingly. Normally, a
consensus instance simply decides on a value, and the corresponding block is
appended to the blockchain. However, a process might also be behind other
processes which have already taken decisions for more than one level. In this
case, as soon as the process invokes the $\pullChain$ primitive, it retrieves
missed decisions and thus possibly more blocks are appended to the blockchain.


We note that in the presence of dynamic committees, it is not enough
that processes call $\pullChain$ punctually when they are
behind. Indeed, assume that a process~$p$ decides at level~$\ell$ but
the others are not aware of this and have not decided, because the
relevant messages were lost; also assume that~$p$ is no longer a
member of the committee at level $\ell+1$, consequently, it no longer
broadcasts messages and thus the other processes cannot progress. To
solve this, each process invokes $\pullChain$ regularly, every $I$
time units, where $I>0$ is some constant.
\begin{remark}
  In practice, given an estimate $\delta_{\mathit{max}}$ of the
  maximum message delay~$\delta$, $I$ should be chosen bigger than
  $\delta_{\mathit{max}}$, as it is not useful to invoke the pull
  primitive more often than once every $\delta$ time~units.
\end{remark}


During the execution of a consensus instance, processes continuously
handle events to update their state.
The event processing loop is implemented by the $\waitMsg$ procedure in Fig.~\ref{alg:drc}.
The termination of the event handler is controlled by the $\stopHandler$
procedure, which is specific to the single-shot consensus algorithm.
There are two kinds of events: message receipts, represented by the
$\NewMessage$ event, and chain receipts, represented by
the $\NewChain$ event.
Upon receiving a new message $\msg$, a process~$p$ dispatches
it to the consensus instance.
Upon the receipt of a new chain,~$p$ updates accordingly:
\begin{itemize}
\item if the new chain is longer, and is valid,~$p$ updates its state
  and starts a new consensus instance for a higher level; this is
  because the \ls{return} on line~\ref{l:ret} passes the control back
  to the $\fn{runDRC}$ procedure in line~\ref{l:ri};
\item if the new chain has the same length but a better head,~$p$ only updates
  its state while remaining at the same level. Note however that only the
  DRC-related state is update, the single-shot instance continues unabated.
  The procedure $\betterChain$ is specific to the single-shot consensus
  algorithm. We detail it in Section~\ref{sec:algorithm}. For the moment, we
  note that by means of $\betterChain$, all processes have the same reference
  point for synchronization.
\end{itemize}
In addition to the chain, the $\NewChain$ event includes~$\pc$,
which serves as a justification that the head's value has indeed been
agreed upon.

The role of $\validChain(\var{chain})$ is two-fold:
\begin{enumerate}
\item to check whether $\var{chain}$'s head and the certificate from
  $\pc$ match; for this to be possible, we assume access to a
  procedure $\fn{getCertificate}$ provided at the single-shot
  consensus level;
\item to check whether for any level~$\ell$ the
  predicate~$\isValid(\var{chain}[\ell], \var{chain}[\ell-1])$ is
  satisfied; this means that the hash field in the header of the block $\var{chain}[\ell]$ equals
  $\fn{hash}(\var{chain}[\ell-1])$ (so that the
  predicate~$\consistent$ is satisfied), and that the value in each
  block is proposed by the right committee (so that the
  predicate~$\legitimate$ is satisfied); for the latter to be
  possible, certificates are stored in blocks as single-shot consensus
  specific elements (see Section~\ref{s:msg}).
\end{enumerate}



\begin{remark}
  The validity check on line~\ref{l:valid} is brought forward for
  clarity. In practice, since this check is heavier, one would first
  do the lighter checks on the length, respectively on the head.
\end{remark}

The DRC solution we presented is generic, one can instantiate it by
providing implementations to $\initInstance$, $\startInstance$,
$\fn{getCertificate}$, $\betterChain$, and $\stopHandler$. We show how
to concretely implement them in Sections~\ref{sec:gssc}
and~\ref{ssec:map}.


\section{A synchronizer for blockchains}\label{sec:sync}

We describe a synchronizer for \emph{round}-based consensus algorithms.
Round-based consensus algorithms progress in rounds, where, at each round,
processes attempt to reach a decision, and if they fail, they advance to the
next round to make another attempt.

In the context of round-based consensus algorithms, one standard way to achieve
termination of a single consensus instance is to ensure that processes remain at
the same round for a sufficiently long period of time \cite{DLS88,pbft}.
The synchronizer we propose realizes this by leveraging the immutability of the
blockchain.
One feature of our synchronizer is that it does not exchange any message thus it
does not increase the communication complexity. Instead of exchanging messages,
it relies on rounds having the same duration for all processes.
We require that round durations are increasing and unbounded.
Concretely, the duration of a round $r>0$ is given by $\roundDur(r)$,
where $\roundDur$ is a function with domain~$\Nat\setminus\set{0}$
such that, for any duration~$d\in\Nat$, there is a round~$r$ with
$\roundDur(r)\geq d$.
Furthermore, we assume that round durations are larger than the clock skew, so
that rounds are not skipped in the synchrony period. 

\begin{remark}\label{remark:choosingDelta}
In practice, given estimates $\delta_{\mathit{real}}$ of the real
message delay and $\delta_{\mathit{max}}$ of the maximum message
delay~$\delta$, we would choose $\roundDur$ such that: (i)
$\roundDur(1)$ is slightly bigger than~$\delta_{\mathit{real}}$, (ii)
$\roundDur$ increases rapidly (e.g.~exponentially) till it
reaches~$\delta_{\mathit{max}}$, and (iii) then it increase slowly
(e.g.~linearly) afterwards.
\end{remark}

To determine at which round the process should be, the synchronizer
relies on local clocks. Therefore, when clocks are synchronized, all
processes will be at the same round. However, a prerequisite is that
processes agree on the starting time of the current instance.
As different processes may decide at different rounds, and therefore
at different times, there is a priori no consensus about the start
time of an instance.
We adopt a solution based on the following observation: if the round at which a
decision is taken is eventually known by all processes, then they can agree on a
common global round at which a consensus instance is considered to have
terminated. Indeed, a process considers that the consensus instance has ended at
the smallest round at which some process has decided.

The above solution can be implemented by (1) considering that a block
header stores the round at which the block is produced, and
(2) using the $\betterChain$ procedure, which is called by a process~$p$ at
line~\ref{line:betterChain} upon receiving a new chain in response to
a $\pullChain$ request. This procedure checks
if some other process has already taken a decision sooner, in terms of
rounds. If this is the case, $\betterChain$ signals to its caller that
it is ``behind'' and thus that it needs to resynchronize. We postpone
the concrete implementation of~$\betterChain$ to
Section~\ref{ssec:baker} because it is specific to the single-shot
consensus algorithm. For the moment, to illustrate the role of
$\betterChain$, Fig.~\ref{fig:updatechain} shows an update of the head
of a process $p$'s blockchain. Initially, the head of~$p$'s local
chain is~$b'$. Then, $p$ sees the block~$b''$ at level~$\ell$ with a
smaller round than~$b'$ and therefore updates the head of its local
chain to~$b''$.

\begin{figure}[t]
  \centering
  \begin{tikzpicture}
  [
    grow                    = right,
    sibling distance        = 2em,
    level distance          = 6em,
    edge from parent/.style = {draw, -latex},
    treenode/.style         = {shape=rectangle, draw, align=center, top color=white, rounded corners},
    env/.style      = {treenode}, 
    com/.style      = {treenode, minimum size = .9cm, thick, top color = black},
    dummy/.style    = {circle},
    sloped,
    minimum size=.2cm,
    thick,
    brace/.style={
        decoration={brace},
        decorate
    },
    bracem/.style={
        decoration={brace, mirror},
        decorate
    },
    font=\sffamily,
    scale=0.85,
    every node/.style={transform shape}
  ]
  \node [com, label=above:$0$] (root) {}
  child { node [com, label=above:$1$] {}
    child { node [dummy] {\hphantom{aa}.....\hphantom{aa}}
      child { node [com, label=above:$\ell_p-2$, label=below:$b$] (stop) {}
        child { node [dashed, env, label=below:$b''$] (stopq) {$(u,h)$\\$r=2$} edge from parent [<-]}
        child [missing]
        child [missing]
          child { node [env, label=above:$\ell_p-1$, label=below:$b'$] (stopp) {$(u,h)$\\$r=3$} edge from parent [<-]}
      edge from parent [<-] node [below] {} }
      edge from parent [<-] node [below] {} }
    edge from parent [<-] node [below] {} }
  ;

  \draw [bracem] ($(root.west) - (0,1)$) -- node [below=.03cm
    ,font=\fontsize{10}{0}\selectfont
  ] {committed blocks} ($(stop.east) - (0, 1)$);

  \draw [bracem]
  let \p1 = (root.west) in
  let \p2 = (stopq.east) in
  ($(root.west) - (0,2)$) --
  node [below=.03cm
    ,font=\fontsize{10}{0}\selectfont
  ] {$q$'s local chain}
  ($(stopq.east) - (0, 2) + (0, \y1-\y2)$);

  \draw [brace]
  let \p1 = (root.west) in
  let \p2 = (stopp.east) in
  ($(root.west) + (0,2.1)$) --
  node [above=.03cm
    ,font=\fontsize{10}{0}\selectfont
  ] {$p$'s local chain}
  ($(stopp.east) + (0,2.1) + (0, \y1-\y2)$);
  \end{tikzpicture}
  \caption{An update of the head of $p$'s blockchain. Solid boxes represent
    blocks in~$p$'s blockchain before the update, while the dashed box
    represents the block that triggers the update. Block levels and labels are
    given above and respectively below the corresponding boxes. The hash $h$ is
    that of block~$b$.
  }
  \label{fig:updatechain}
\end{figure}

Finally, we present the synchronization procedure in Fig.~\ref{fig:sync}. We
assume that the genesis block contains the time~$t_0$ of its creation.
To synchronize, $p$ uses its local clock, whose value is obtained by calling
$\fn{now}()$, and the rounds of the blocks in its blockchain to find out what
its current round and the time position within this round should be. Process~$p$
determines first the starting time of the current level and stores it
in~$t$. To do this,~$p$ adds the durations of all rounds for all
previous levels.
 Once~$p$ has determined~$t$, it finds the current round by checking
 incrementally, starting from round $r=1$ whether the round $r$ is the current
 round: $r$ is the current round if there is no time left to execute a higher
 round. The variable $t$ is updated to represent the time at which round~$r$
 started.  The difference $t' - t$ represents the offset between the beginning
 of the round $r$ and the current time.

\begin{figure}[t]
\begin{minipage}{0.45\textwidth}
\begin{lstlisting}[name=Alg, xleftmargin=-.02cm]
proc $\fn{synchronize}()$
  $t'$ := $\fn{now}()$
  $t$ := $t_0+\sum_{\ell=0}^{\ell_p-1}\sum_{j=1}^{\decisionRound(\ell)}\roundDur(j) +\sum_{j=1}^{r'}\roundDur(j)$ $\label{line:tsum}$
  $r$ := $1$
  while $t + \roundDur(r) \leq t'$ do $\label{line:tloop}$
    $t$ := $t + \roundDur(r)$
    $r$ := $r + 1$
  return $(r, t'-t)$
\end{lstlisting}
\end{minipage}
\begin{minipage}{.6\textwidth}
  \begin{tikzpicture}[y=10pt, x=10pt,
      thick, font=\footnotesize]

    \tikzset{
      position label/.style={
        above = 3pt,
        text height = 1.5ex,
        text depth = 1ex
      },
      brace/.style={
        decoration={brace, mirror},
        decorate,
      }
    }

    \draw [->] (4.7,0) -- (28.5,0);

    \foreach \x in {7.5,18}
    \draw (\x,5pt) -- (\x,-5pt);

    \foreach \x in {10,14,24}
    \draw (\x,2pt) -- (\x,-2pt);


    \node [position label] (B) at (5.3,0) {$\cdots$};
    \node [position label] (B0') at (7.5,0) {$t_{\ell_p-1}^1$};
    \node [position label] (B1') at (10,0) {$t_{\ell_p-1}^2$};
    \node [position label] (dots0) at (12,0) {$\cdots$};
    \node [position label] (B2') at (14,0) {$t_{\ell_p-1}^{r'}$};
    \node [position label] (B2'') at (18,0) {$t_{\ell_p}^{1}$};
    \node [position label] (dots) at (21,0) {$\cdots$};
    \node [position label] (Be') at (24,0) {$t_{\ell_p}^{r}$};

    \draw [-, dotted] (27,8pt) -- (27,-8pt);
    \node [position label] (now) at (27,0) {$\fn{now}()$};
    \draw [brace] (24,-7pt) -- node [below] {$\var{offset}$} ($(now.south) - (0,1.)$);

    \draw [brace] ($(B0'.south) - (0,1.)$) -- node [below] {$\roundDur(1)$} ($(B1'.south) - (0,1.)$);

    \draw [brace] ($(B2'.south) - (0,1.)$) -- node [below] {$\roundDur(r')$} ($(B2''.south) - (0,1.)$);

    \draw [brace] ($(B0'.south) - (0,2.5)$) -- node [below] {$\sum_{j=1}^{r'}\roundDur(j)$} ($(B2''.south) - (0,2.5)$);



  \end{tikzpicture}
\end{minipage}
\caption{A round-based synchronizer and a timeline. Small/large vertical lines
  represent round/level boundaries, respectively.}
\label{fig:sync}
\end{figure}

\begin{remark}
  If blocks include their creation timestamp, then the sum at
  line~\ref{line:tsum} can be more efficiently computed by adding the
  timestamp of the head and the duration of the rounds for the current
  level.
  For simplicity, we omit timestamps in blocks.
\end{remark}

Fig.~\ref{fig:sync} also illustrates the timeline of a process that increments
its rounds using the procedure~\fn{synchronize}, where $t_{\ell_p}^r$ represents
the starting time of the round~$r$ of level~$\ell_p$ and $r'$ stands for
the last round of level~$\ell_p-1$. The figure also illustrates the offset $t' - t$.

\begin{remark}
  \label{r:sync}
  When the synchronizer is called is specific to the consensus instance. This
  should be done such that processes which are ``behind'' resynchronize in a
  timely manner. This means, in particular, that
  %
invoking the synchronizer should not be based only on the $\NewMessage$ and
  $\NewChain$ events: in case the event triggering the synchronization attempt
  occurs before $\tau$, the attempt may fail if clocks are not synchronized.
  One way to ensure that processes eventually synchronize is for the consensus
  instance, trigger the synchronization at regular time intervals.
\end{remark}

\section{A Single-Shot Consensus Skeleton}
\label{sec:gssc}

In this section we give a generic implementation for the procedure
$\runInstance$ from Section~\ref{sec:drcbc}.
Here we make another standard assumption on the structure of the single-shot
consensus algorithm, namely that each round evolves in sequential
\emph{phases}. For instance, PBFT in normal mode has 3 phases (named
\emph{pre-prepare}, \emph{prepare}, and \emph{commit}), Tendermint as well, DLS
and Hotstuff have 4 phases, etc.

We let $m$ denote the number of phases.
As for rounds, we assume that each phase has a predetermined duration. The duration
is given by the round~$r$ it belongs to, and it is denoted
$\phaseDur(r)$. For simplicity, we assume that $\roundDur(r) =
m\cdot\phaseDur(r)$.
We also refine the assumption on round durations, and also require that phase
durations are larger than the clock skew, so that phases are not skipped in the
synchrony period, i.e. $\Delta'(1)>2\rho$.

To synchronize correctly, a process also needs to update its phase (not only its
round) and to know its time position within a phase.
These can be readily determined from the round and the round offset
returned by \fn{synchronize}. The procedure $\getNextRoundAndPhase$,
presented in Fig.~\ref{alg:gss}, performs this task.
For the pseudocode, we consider that each phase has a label identifying it and
we use $\var{phases}$ to denote the sequence of phase labels.
%


\begin{figure}[t]
 \begin{minipage}{0.55\textwidth}
   \begin{lstlisting}[name=Alg,xleftmargin=-.05cm]
proc $\runInstance()$
  $(\var{round},\var{roundOffset})=\fn{synchronize}()$ $\label{line:sync}$
  if $r_p>\var{round}$ then $\label{line:isAhead}$ # $\textcolor{blue}{p}$ is ``ahead''
    $\runInstance()$ $\label{line:runCIreccall}$
  else  # $\textcolor{blue}{p}$ is ``behind''
    $(\var{phase},\var{phaseOffset})$ := $\getNextRoundAndPhase(\var{round},\var{roundOffset})$ $\label{line:getPhase}$
    $r_p$ := $\var{round}$ $\label{l:rs}$
    set $\var{runEventHandler}$ timer to $\phaseDur(r_p) - \var{phaseOffset}$ $\label{line:setT}$
    if $p \in \selectProcessesL(\ell_p)$ then $\label{line:checkValidatorWaitMsg}$
      goto $\var{phase}$ $\label{line:jumpToPhase}$
    else
      goto $\observer{\var{phase}}$ $\label{line:jumpToObserverPhase}$
  \end{lstlisting}
 \end{minipage}
 \begin{minipage}{0.4\textwidth}
   \begin{lstlisting}[name=Alg]
proc $\getNextRoundAndPhase(\var{round},\var{roundOffset})$
  $i$ := $\var{roundOffset}\; /\; \phaseDur(\var{round})$
  $\var{phase}$ := $\var{phases}[i]$
  $\var{phaseOffset}$ := $\var{roundOffset}- i\cdot\phaseDur(\var{round})$
  return $(\var{phase}, \var{phaseOffset})$

proc $\fn{advance}(\decisionOpt)$ $\label{l:advance}$
  match $\decisionOpt$ with
    | $\mathsf{Some}$ $(\decision, \decisionCert)$ ->
       $c$ := $\blockchain_p$ ++ $\decision$
       return $(c, \decisionCert)$ $\label{l:reta}$
    | $\mathsf{None}$ -> # no decision
       $r_p$ := $r_p + 1$  $\label{line:roundIncrement}$
       $\filterMsgs()$
       $\runInstance()$  $\label{line:runInstance}$
\end{lstlisting}
 \end{minipage}
  \caption{Entry point and progress procedures for generic single-shot consensus.}
  \label{alg:gss}
\end{figure}

The entry point of a single-shot consensus instance is $\runInstance$, given in
Fig.~\ref{alg:gss}.
As part of its state, a process~$p$ also maintains its current round~$r_p$.
A process~$p$ starts by calling \fn{synchronize} in an attempt to
(re)synchronize
with other processes. We recall that this is just an attempt and not a guarantee
because clocks are not necessarily synchronized before~$\tau$.
If \fn{synchronize} returns that $p$ should be at a round in the past with
respect to~$p$'s current round, then $p$ invokes (indirectly) the synchronizer
again. This active waiting loop ensures that $p$ is ready to continue its
execution as soon as it is not ``ahead'' anymore. We note that a jump backward
to a previous round or phase may jeopardize safety.
%
When $p$ is ``behind'',
%
%
it first uses the procedure
$\getNextRoundAndPhase$ to obtain the phase at which it should be.
%
%
%
Next, it updates its round and the timer used to time the execution of the event
handler. Concretely, through this timer, the generic procedure $\stopHandler$
(used by~$\waitMsg$ at line~\ref{line:timer} in Fig.~\ref{alg:drc}) is
implemented as follows:

\noindent
\begin{minipage}{\linewidth}
\begin{lstlisting}[name=Alg]
proc $\stopHandler()$
  return true iff timer $\var{runEventHandler}$ expired
\end{lstlisting}
\end{minipage}
Afterwards, $p$ checks whether it is part of the committee for level~$\ell_p$. 
%
To this end, we assume having access to a $\selectProcessesL$ function, which
returns the committee at some given level~$\ell$. This function corresponds to
$\selectProcesses(\bar{v}_p[..(\ell-k)])$ (Section~\ref{ssec:def}), where
$\bar{v}_p$ is the sequence of output values of the caller process~$p$.
Finally, $p$ executes the single-shot consensus algorithm according to its role and
to the phase returned by $\getNextRoundAndPhase$.
The determined phase is executed by means of an unconditional jump to
corresponding phase label.
The two \ls{goto} statements in Fig.~\ref{alg:gss} are intentionally symmetric
for committee and non-committee members to keep all processes in sync. Doing so
also for non-committee members has the advantage of not introducing delays when
they eventually become part of the committee.

Fig.~\ref{alg:gss} also shows the $\fn{advance}$ procedure, which is used by
processes to handle the progress of the current consensus instance by either
returning the control to $\fn{runDRC}$ when a decision can been taken at the
current round; or otherwise increasing the round. In this former case,
$\fn{advance}$ first prepares the updated blockchain, appending the block
corresponding to the decision to its current blockchain; $\fn{runDRC}$ will then
update the state accordingly, for instance increasing the level.
The procedure $\fn{advance}$ has one parameter,
which is optional, represented in the pseudocode as a value of an optional type
(with values of the form $\mathsf{Some}\: x$ if the parameter is present or
$\mathsf{None}$ if it is not).
The parameter is present when the current consensus instance has taken a
decision. In this case, the parameter is a tuple consisting of a block
containing the decided value and of a certificate justifying the
decision. Otherwise, when no decision is taken, the process increases its round
and filters its message buffer by removing messages no longer necessary. The
filtering procedure $\filterMsgs$ is specific to the consensus instance.

We conclude by presenting in Fig.~\ref{alg:observer} the pseudocode capturing
the behavior of the processes which are not part of a committee for a given
level. We call such processes \emph{observers}.
Contrary to committee members, observers are {passive} in the sense
that they only receive (but not send) messages and update their state
accordingly. 

\begin{figure}[t]
\begin{lstlisting}[name=Alg]
$\observer{\var{phases}[1]}$ phase:
  $\waitMsg()$
$\vdots$
$\observer{\var{phases}[m-1]}$ phase:
  $\waitMsg()$

$\observer{\var{phases}[m]}$ phase:
  $\waitMsg()$
  $\fn{advance}(\getDecision())$
\end{lstlisting}
\caption{Generic single-shot algorithm for an observer.}
\label{alg:observer}
\end{figure}

%

This observer behavior serves two purposes:
\begin{inparaenum}[i)]
\item to keep the blockchain at each observer up to date;
\item to check at the end of the round whether a decision was taken,
  and if so, whether the observer becomes a committee member at the next
  level.
\end{inparaenum}
%
To achieve i), the observer checks if it can adopt a
proposed value. It does so by invoking the $\waitMsg$ and \fn{advance}
procedures, where the parameter to \fn{advance} is obtained using the
procedure $\getDecision$, which is specific to the single-shot
consensus algorithm.
Concerning ii), when the corresponding check
(line~\ref{line:checkValidatorWaitMsg}) is successful, the observer switches
roles and acts as a committee member. We note that
line~\ref{line:checkValidatorWaitMsg} is reached when the observer end its round
and calls $\fn{advance}$, which in turn calls $\runInstance$ at the end.

\begin{remark}
  At this point, we anticipate and mention that the pseudocode a committee member executes
  has the same structure as the code of an observer.
  In particular, all processes run $\fn{advance}$ at the end of the last phase
  and thus, synchronize (line~\ref{line:sync}) in a time-based manner (depending
  on round duration), as suggested in Remark~\ref{r:sync}.
  %
\end{remark}

As for the DRC solution in Section~\ref{sec:drcbc}, the methods presented in this
section are generic. One can instantiate them by providing implementations to
the $\filterMsgs$ and $\getDecision$ procedures.
We show such concrete implementations in the next section.

\section{Single-shot \ourproto}\label{sec:algorithm}
\label{ssec:map}

To show the specific phase behavior of a committee member, we first introduce
some terminology inspired by Tezos. \ourproto committee members are called
\emph{bakers}. At each round, a value is proposed by the proposer whose turn
comes in a round-robin fashion.  \ourproto has three types of phases: \ProposeP,
\PreendorseP, and \EndorseP, each with a corresponding type of message:
$\ProposeM$ for proposals, $\PreendorseM$ for preendorsements, and $\EndorseM$
for endorsements. A fourth type of message, $\PreendorsementsM$, is for the
re-transmission of preendorsements.
A baker \emph{proposes}, \emph{preendorses}, and \emph{endorses} a
value~$v$ (at some level and with some round) when the baker
broadcasts a message of the corresponding type. Only one value per
round can be proposed or (pre)endorsed. A set of at least $2f+1$
(pre)endorsements with the same level and round and for the same value
is called a \emph{(pre)endorsement quorum certificate} (QC).
%

We consider that $\ProposeM$ messages are blocks.
This is a design choice that has the advantage that values do not have
to be sent again once decided.

Within a consensus instance, if a baker $p$ receives a preendorsement
QC for a value~$v$ and round $r$, then $p$ keeps track of $v$ as an
\emph{endorsable value} and of $r$ as an \emph{endorsable round}.
Similarly, if a baker $p$ receives a preendorsement QC for a value
$v$ and round $r$ during the \EndorseP phase of the round $r$, then $p$ locks on the
value $v$, and it keeps track of $v$ as a \emph{locked value} and of
$r$ as a \emph{locked round}.
Note that the locked round stores the most recent round at which $p$
endorsed a value, while the endorsable round stores the most recent
round that $p$ is aware of at which bakers may have endorsed a
value.

In a nutshell, the execution of a round works as follows.
During the \ProposeP phase, the designated proposer proposes a
value~$v$, which can be newly generated or an endorsable value from a
previous round $r$ of the same consensus instance.
During the \PreendorseP phase, a baker preendorses $v$ if it is not locked or if
it is locked on a value at a previous round than $r$; in particular, it does not
preendorse $v$ if it is locked and~$v$ is newly generated.
If a baker does not preendorse $v$, then it sends a
$\PreendorsementsM$ message with the preendorsement~QC that justifies
its more recent locked round.
During the \EndorseP phase, if bakers receive a preendorsement~QC
for $v$, they {lock} on it and endorse it.
If bakers receive an endorsement QC for~$v$, they \emph{decide}~$v$.

\ourproto inherits from classical BFT solutions the two voting phases
per round and the locking mechanism. Tracking endorsable values is
inherited from~\cite{tendermintv2}. \ourproto
distinguishes itself in a few aspects which we detail next.

\smallskip

\noindent\textbf{Preendorsement QCs.}
%
 For safety, bakers accept endorsable values only from higher
 rounds than their locked round. Assume a correct baker~$p$ locks and all other correct bakers
 locked at smaller rounds. Assume also that the messages from~$p$ are
 lost. To prevent~$p$ from not making progress, it is enough to
 include the preendorsement~QC that made~$p$ lock in $\EndorseM$ and
 $\ProposeM$ messages. In this way, bakers can update their endorsable values and rounds
 accordingly and propose values that can be accepted by any correct
 locked baker.
%
%
Note that Tendermint does not need such QCs because it assumes
reliable communication even in the asynchronous period.

\smallskip

\noindent\textbf{The $\PreendorsementsM$ message.}
 To achieve faster termination of a consensus instance, when a baker refuses a proposal
 because it is locked on a higher round than the endorsable round of
 the proposed value, it broadcasts a $\PreendorsementsM$ message. This
 message contains a preendorsement~QC justifying its higher locked
 round. During the next round, bakers use this QC to set their
 endorsable value to the one with the highest round. The consensus
 instance can then terminate with the first correct
 proposer. Consequently, in the worst-case scenario, i.e., when the
 first $f$ bakers are Byzantine, \ourproto terminates in $f+2$ rounds
 after $\tau$, assuming that processes have achieved round
 synchronization and that the round durations are sufficiently large.



\smallskip

\noindent\textbf{Endorsement QCs.} For processes to be able to check
that blocks received by calling $\pullChain$ are already agreed upon,
each block comes with an endorsement QC for the block at the previous
level.
Furthermore, for the same reason, in response to a pull request, a
process also attaches the endorsement QC that justifies the value in
the head of the blockchain.
Recall that a block is a $\ProposeM$ message, which contains a value
that has not yet been decided at the moment the message is sent; thus,
the endorsement QCs cannot justify the current value.

\subsection{Process state and initialization}
\label{ssec:state}
In addition to the variables mentioned in Section~\ref{sec:drcbc}, a process~$p$
running \ourproto maintains its current round~$r_p$ as well as:
\begin{itemize}
\item $\lockedBlock_p$ and $\lockedRound_p$
  to keep track respectively of the value on which~$p$ is locked and the round
  during which $p$ locked on it,
\item $\validBlock_p$ to keep track of the proposed value with
  a preendorsement QC (with the highest round), which can therefore be considered
  endorsable,
\item $\validRound_p$ and
  $\validPreendorsements_p$ to store the round and the
  preendorsement QC corresponding to an endorsable value;
\item $\headQC_p$ to store the endorsement QC for~$p$'s last decided
  value.
\end{itemize}

The variables $\validBlock_p$ and $\validRound_p$ can be recovered from the
variable $\validPreendorsements_p$. We only use them for readability.

The variable~$\headQC_p$ is the one introduced in Section~\ref{sec:drc}. We note
that, at level~1, $\headQC_p$ is empty (see Fig.~\ref{alg:drc},
line~\ref{line:updateState-first}).

%

The state of a process is initialized by the
procedure~\fn{initConsensusInstance}:
\begin{lstlisting}[name=Alg]
proc $\initInstance()$
  $r_p$ := 1
  $\lockedBlock_p$ := $\mbot$; $\lockedRound_p$ := $0$
  $\validBlock_p$ := $\mbot$; $\validRound_p$ := $0$
  $\validPreendorsements_p$ := $\emptyset$
  $\msgs_p$ := $\emptyset$
\end{lstlisting}
where, by abuse of notation, we use $x := \mbot$ to denote
that $x$ has become undefined.
%
%

%
%
\subsection{Messages and blocks}
\label{s:msg}
We write messages using the following syntax: ${\var{type}_p(\ell, r,
  h, \var{payload})}$, where $\var{type}$ is $\ProposeM$,
$\PreendorseM$, $\EndorseM$, or $\PreendorsementsM$, $p$ is the
process that sent the message, $\ell$ and $r$ are the level and the
round during which the message is generated, $h$ is the hash the block
at level~$\ell-1$, and $\var{payload}$ is the type specific content of
the message. Next, we describe the payloads for each type of message.

The payload $(\var{eQC},u,\var{eR},\var{pQC})$ of a
$\ProposeM$ message contains
the endorsement quorum~$\var{eQC}$ that justifies
the block at the previous level and the proposed value~$u$ to be
agreed on.
The payload also contains, in case~$u$ is a previously proposed value, the
corresponding endorsable round and the preendorsement QC that
justifies~$u$. If the proposed value is new, then $\var{eR}$ is $0$
and $\var{pQC}$ is the empty set.

Given a $\ProposeM(\ell,r,h, (\var{eQC},u,\var{eR},\var{pQC}))$
message, the corresponding block has contents $u$, while the remaining
fields are part of the block header.
%
%
We emphasize that the distinction between a propose message and a
block is only made for presentation purposes, to separate concerns.

\begin{remark}
  We stress that, though both the hash and the $\var{eQC}$ in a
  block~$b$ point to the predecessor of~$b$, they serve different
  purposes: the hash is needed for agreeing on the whole block while
  the $\var{eQC}$ is needed for checking that the content of the
  predecessor block was agreed upon by the right committee. Since
  certificates are not uniquely identifying blocks (but just block
  contents), certificates cannot replace hashes.
\end{remark}

The payload of a $\PreendorseM$ message consists of the value to be
agreed upon while
the payload of an $\EndorseM$ message consists of an endorsed value.
%

The payload of a $\PreendorsementsM$ message consists of a
preendorsement QC justifying some endorsable value and round.

\subsection{Message management}
\label{ssec:mm}

\ourproto messages are handled by the procedure $\handleConsensusMessage$
depicted in Fig.~\ref{alg:tb}.  Upon the retrieval of a new message $\msg$, a
process~$p$ first checks if the level, round, and hash in $\msg$'s header match
respectively $p$'s current level, either the current round or the next round,
and the hash of the block at the previous level. If yes, $p$ then checks that
the message is valid, with the procedure $\validMessage$, detailed below.
If $\msg$ is indeed valid, then it is stored in the message buffer.
Also, the variables $\validBlock_p$, $\validRound_p$, and
$\validPreendorsements_p$ are updated by the
procedure~\fn{updateEndorsable} if a preendorsement QC is observed for
a higher round than the current $\validRound_p$. This can happen if
there is already a preendorsement QC for the current proposed value in
$\msgs$ or if a preendorsement QC for a higher round than the current
$\validRound_p$ is attached to a $\ProposeM$, or $\PreendorsementsM$
message (line~\ref{line:indirectEndoUpdate}).
Finally, if the received message is from either a higher level or from the same
level but with a different hash, then $p$ attempts to resynchronize by calling
$\pullChain$. This is an optimization as the $\pullChain$ is anyway called
regularly as described in Section~\ref{sec:drcbc}.

The procedure $\validMessage$ checks the validity of each message.
$\ProposeM_q$($\ell$,$r$,$h$,($\var{eQC}$, $u$,$\var{eR}$,$\var{pQC}$))
is \emph{valid} if $q$ is the proposer for level~$\ell$ and round~$r$
and if $\var{eQC}$ is an endorsement QC for level~$\ell-1$ with the
round, hash, and value matching those in $p$'s blockchain.
%
In addition, either $\var{pQC}$ is empty and $\var{eR}$ is~$0$ (i.e.~$u$ is
newly proposed), or the round, value, and hash from $\var{pQC}$ match
$\var{eR}$, $u$, and $h_p$, respectively.
Messages in $\var{eQC}$ and $\var{pQC}$ must be valid themselves, in particular they
must be generated by bakers at levels $\ell-1$ and $\ell$,
respectively.
These validity checks ensure that the value $(u,h)$ satisfies
the $\legitimate$ predicate from Section~\ref{ssec:def}.
The validity conditions for the other types of messages are similar,
and thus omitted.
We note, however, that for preendorsements and endorsements it is
required that the corresponding proposal has been already received, so
that it can be checked that the hash included in the payload matches
the proposed value.

We highlight two additional aspects of $\handleConsensusMessage$:
\begin{itemize}
\item messages for the next round are kept (line~\ref{line:msgadd}) to
  cater for the possible clock drift;
\item messages from higher levels trigger~$p$ to ask for the sender's
  blockchain (line~\ref{line:callpc}), because such messages ``from the
  future'' suggest that $p$ is behind; however, the sender might be lying
  about being ahead.
\end{itemize}

Recall that the procedure \fn{advance} only calls $\filterMsgs()$
after a round increment. In the case of \ourproto, this procedure
removes messages not for the current round
(line~\ref{line:resetMessageSet}), thus ensuring that message buffers
are bounded.

For completeness, the helper procedures used in Fig.~\ref{alg:tb} are described
as follows:
\begin{itemize}
\item \fn{proposedValue()} returns the current proposed value.
  of the block at level~$\ell$;
\item $\fn{valueQC}(\var{qc})$ and~$\fn{roundQC}(\var{qc})$ return the
  value and respectively the round from a quorum certificate~$\var{qc}$;
\item $\fn{pQC}(\var{msg})$ returns the preendorsement QC from a $\ProposeM$ or
  $\PreendorsementsM$ message~$\var{msg}$;
\item $\proposal()$, $\pset()$, and $\vset()$ return the proposal,
  preendorsements, and respectively the endorsements contained in
  $\msgs$. We note that, thanks to filtering, the messages in the
    sets of (pre)endorsements match on the round.
\end{itemize}

\subsection{\ourproto main loop}
\label{ssec:baker}

The execution of one round of \ourproto by baker $p$, when the round's three
phases are executed in sequence, is given in Fig.~\ref{alg:tb}. We recall that
the pseudocode has the same structure as that for observers, as described in
Section~\ref{sec:gssc}.  Recall also from Section~\ref{sec:gssc} that forward
jumps can nevertheless occur when $p$ detects it is behind.

\begin{figure}[ht]
\begin{minipage}{0.53\textwidth}
\begin{lstlisting}[name=Alg,xleftmargin=-.1cm]
proc $\handleConsensusMessage(\var{msg})$
  let $\var{type}_q(\ell, r, h, \var{payload})$ = $\var{msg}$
  if $\ell = \ell_p \land h = h_p \land (r = r_p \lor r = r_p+1)$ then
    if $\validMessage(\var{msg})$ $\label{line:validMessage}$
      $\msgs_p$ := $\msgs_p \cup \set{\var{msg}}$ $\label{line:msgadd}$
      $\updateEndorsable(\var{msg})$
  if $(\ell = \ell_p \land h \neq h_p) \vee \ell>\ell_p$ then
    pullChain $\label{line:callpc}$

proc $\updateEndorsable(\var{msg})$
  if $|\pset()| \geq 2f+1$ then $\label{line:directEndoUpdate}$
    $\validBlock_p$ := $\fn{proposedValue}()$
    $\validRound_p$ := $r_p$ $\label{PrepValidRound}$
    $\validPreendorsements_p$ := $\pset()$ $\label{line:directEndoUpdateEnd}$
  else if $\fn{type}(\var{msg}) \in \{\ProposeM, \PreendorsementsM\}$ then
    $\var{pQC}$ := $\fn{pQC}(msg)$
    if $\fn{roundQC}(\var{pQC}) > \validRound_p$ then $\label{line:indirectEndoUpdate}$
      $\validBlock_p$ := $\fn{valueQC}(\var{pQC})$
      $\validRound_p$ := $\fn{roundQC}(\var{pQC})$  $\label{line:indirectEndoUpdateRound}$
      $\validPreendorsements_p$ := $\var{pQC}$

proc $\filterMsgs()$
  $\msgs_p$ := $\msgs_p \setminus $
           $\set{\var{type}(\ell,r,h,\var{payload})\in\msgs_p\mid r \neq r_p}$  $\label{line:resetMessageSet}$
  \end{lstlisting}
\end{minipage}
\begin{minipage}{.55\textwidth}
\begin{lstlisting}[name=Alg]
 $\ProposePL$ phase:
  if $\proposer(\ell_p, r_p) = p$ then $\label{ppropCheck3}$
    $u$ := if $\validBlock_p \neq \mbot$ then $\validBlock_p$
         else $\getBlock()$ $\label{ppropValueToSend}$
    $\var{payload}$ := $(\headQC_p, u,$
                $\validRound_p, \validPreendorsements_p)$
    broadcast $\ProposeM_p(\ell_p, r_p, h_p, \var{payload})$ $\label{ppropCheck3End}$
  $\waitMsg()$

 $\PreendorsePL$ phase:
  if $\exists q, \var{eQC}, u, \var{eR}, \var{pQC}:$
     $\ProposeM_q(\ell_p, r_p, h_p, (\var{eQC}, u, \var{eR}, \var{pQC})) \in \msgs_p\,\land$ $\label{propRcv}$
     $(\lockedBlock_p = u \vee \lockedRound_p < \var{eR} < r_p)$ then $\label{propCheckLock}$ $\label{propCheck1}$
    broadcast $\PreendorseM_p(\ell_p, r_p, h_p, \hash(u))$ $\label{propCheck1End}$
  else if $\lockedBlock_p \neq \bot$ then $\label{preendBc}$
    broadcast $\PreendorsementsM(\ell_p, r_p, h_p,$
                                    $\validPreendorsements_p)$ $\label{preendBc2}$
  $\waitMsg()$

 $\EndorsePL$ phase:
  if $|\pset()| \geq 2f + 1$ then $\label{Endorse}$
    $u$ := $\fn{proposedValue}()$
    $\lockedBlock_p$ := $u$; $\label{EndorseLock}$ $\lockedRound_p$ := $r_p$ $\label{EndorseLockRound}$
    broadcast $\EndorseM_p(\ell_p, r_p, h_p, \hash(u))$ $\label{EndorseEnd}$
    broadcast $\validPreendorsements_p$
  $\waitMsg()$
  $\fn{advance}(\getDecision())$ $\label{line:callAdvance}$
\end{lstlisting}
\end{minipage}
\caption{Single-shot \ourproto for baker~$p$ (right) and message management (left).}
  \label{alg:tb}
\end{figure}


Each phase consists of a conditional broadcast followed by a call to
$\waitMsg$ (described in~Section~\ref{sec:drcbc}). In addition,
the~\EndorseP phase calls \fn{advance} (described
in~Section~\ref{sec:drcbc}). Next, we describe the conditions to
broadcast in each phase.

In the \ProposeP phase, $p$ checks if it is the proposer for the
current level~$\ell_p$ and round~$r_p$ (line~\ref{ppropCheck3}). If
so, $p$ proposes:
\begin{itemize}
  \item either a new value~$u$, returned by the procedure $\getBlock$;
    here it is assumed that $u$ is consistent with respect to the
    value~$u'$ contained in the last block of the blockchain of the
    process that calls this procedure; that is, $\consistent(v,v')$
    holds (see Section~\ref{ssec:def}), where $v,v'$ are the output
    values corresponding to $u,u'$;
  \item or its~$\validBlock_p$ if defined; in this case,~$p$ includes in the
    payload of its proposal the corresponding endorsable round and the
    preendorsement QC that justifies it.
\end{itemize}
The payload also includes the endorsement QC to justify the decision
for the previous level.

In the \PreendorseP phase, $p$ checks if the value~$u$ from the
$\ProposeM$ message received from the current proposer is
preendorsable (lines \ref{propRcv}-\ref{propCheck1}). Namely, it
checks whether one of the following conditions are satisfied:
\begin{itemize}
\item $p$ is unlocked ($\lockedRound_p = 0$, thus the second disjunction at
  line~\ref{propCheck1} is true); or
\item $p$ is locked (i.e.~$\lockedRound_p > 0$), $u$ was already
  proposed during some previous round (i.e.~$0<\var{eR}<r_p$), and:
  \begin{itemize}
  \item $p$ is already locked on $u$ itself (thus the first disjunction at
    line~\ref{propCheck1} is true); or
  \item $p$ is locked on $u'\neq u$ and its locked round is smaller than the
    endorsable round associated to~$u$.
  \end{itemize}
\end{itemize}
In the second case, we know there is a preendorsement QC
for~$u$ and round $\var{eR}$, thanks to the validity check on the
$\ProposeM$ message.
If the condition holds, then~$p$ preendorses~$u$.
If~$p$ cannot preendorse~$u$ because it is locked on some
value~$u'\neq u$ with a higher locked round than $\var{eR}$, then $p$
broadcasts the preendorsement QC that justifies~$v'$. If received on
time, this information allows the next correct proposer to propose a
value that passes the above checks for all correct bakers.

In the \EndorseP phase, $p$ checks if it received a preendorsement QC
for the proposed value~$u$. If yes, $p$ updates its $\lockedBlock$ and
$\validBlock$ and broadcasts its $\EndorseM$ message, along with all
the $\PreendorseM$ messages for $u$ (lines
\ref{Endorse}-\ref{EndorseEnd}).
Note also that in this case $p$ has already updated its endorsable
value to~$u$ and its endorsable round to~$r_p$ while executing
$\waitMsg$.

Finally, at the end of this last phase, which is also the end of the
round, bakers call $\fn{advance}$ with a parameter that signals
whether a decision can be taken or not. This parameter is obtained
using $\fn{getDecision}$, implemented is as follows:
\begin{lstlisting}[name=Alg]
proc $\getDecision()$
  if $|\vset()| \geq 2f+1$ then
    return $\mathsf{Some}$ $(\proposal(),\vset())$ $\label{line:quorumCond}$
  else
    return $\mathsf{None}$
\end{lstlisting}

\subsection{The $\betterChain$ procedure}

The remaining pieces needed for a complete instantiation of the
procedures in Section~\ref{sec:drc} are the implementations
of the~$\betterChain$ and~$\fn{getCertificate}$ procedures.

We recall that~$\betterChain$ takes as parameters a $\var{chain}$ and
$\pc$, both returned by $\pullChain$. We recall also that $\pc$ is
either a proposal at the current level or just a certificate if the
process answering to the $\pullChain$ request has not seen a valid
proposal yet. The role of~$\betterChain$ is to make processes agree on
the same blockchain head; recall that they already agree on the head
contents, but not necessarily on the head's header. Agreeing on the
same blockchain head has in turn two roles:
\begin{itemize}
\item allowing agreement on the round at the which a decision was
  taken at the previous level, which is one of the ingredients for
  processes to synchronize at the current level, as explained in
  Section~\ref{sec:sync}.
\item allowing agreement to take place at the current level; recall
  that at the current level agreement needs to be reached also on the
  hash of the block at the predecessor level, that is, on the hash of
  the head of a process' blockchain.
\end{itemize}

To reach these two goals, as suggested in Section~\ref{sec:sync},
processes adopt the head with the smallest round. However, there is a
caveat: if this would be the only check done by $\betterChain$,
processes might end up with a head on top of which no proposal will be
accepted in case they have seen an endorsable value: indeed, the hash component
of such a value may not match the new head.
To avoid this situation, a process first performs an additional check
in case they have seen an endorsable value. When $\pc$ is a proposal, the check is
similar to the check for preendorsing (line~\ref{propCheckLock}): the
endorsable round of process~$p$ is smaller than the one in the
received proposal (line~\ref{line:betterScore}). When $\pc$ is a
certificate we simply required that the process has not seen an endorsable value.
The $\betterChain$ procedure implementing these checks is given next.

\begin{lstlisting}[name=Alg]
proc $\betterChain(\var{chain}, \pc)$
  let $\block{\ell,r,\dots}{\cdot}$ = $\fn{head}(\var{chain})$
  match $\pc$ with
  | $\ProposeM(\_, \_, \_, (\_, \_, \var{eR}, \_))$ ->
    return $\validRound_p < \var{eR} \vee \big(\validRound_p = \var{eR} \wedge r < \decisionRound(\ell_p-1)\big)$ $\label{line:betterScore}$
  | _ -> # $\textcolor{blue}{\pc}$ is a certificate
    return $\validRound_p = 0 \wedge r < \decisionRound(\ell_p-1)$ $\label{line:betterScore2}$
\end{lstlisting}
In the pseudocode, $\block{\dots}{\dots}$ denotes a block, with the
part before the semicolon representing the block's header and the part
after it its contents.
The procedure $\fn{head}(\var{chain})$ returns the head of~$\var{chain}$. Also, recall that
$\decisionRound(\ell)$ returns the round contained in the header of
the block at level~$\ell$ in the caller's blockchain.

As for the implementation of~$\fn{getCertificate}$, it is a simple
match on~$\pc$:
\begin{lstlisting}[name=Alg]
proc $\fn{getCertificate}(\pc)$
  match $\pc$ with
    | $\ProposeM_q(\_, \_, \_, (\var{eQC}, \_, \_, \_))$ -> return $\var{eQC}$
    | $\var{eQC}$ -> return $\var{eQC}$
\end{lstlisting}

%


\section{Correctness and complexity}
\label{sec:cc}

The following theorem states that \ourproto provides a solution to DRC.
Its proof can be found in Appendix~\ref{sec:correctness}.
\begin{theorem}
  \ourproto satisfies validity, agreement, and progress.
\end{theorem}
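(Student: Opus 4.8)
The three properties are proved separately: validity is a structural induction, agreement reduces to a single-shot agreement lemma lifted through the hash-chaining of blocks, and progress is a liveness argument combining the synchronizer with a bounded-round termination argument for the single-shot instance. For validity I would argue by induction on the length of a correct process $p$'s blockchain that $\isValid(\bar v_p[i],\bar v_p[i-1])$ holds for every $i>0$. A correct process extends its output sequence only through $\fn{updateState}$, which is invoked either after $\runInstance$ returns — in which case the appended block is the $\ProposeM$ message that collected an endorsement QC, hence passed $\validMessage$, which by construction enforces $\legitimate$, and whose header hash equals the hash of the current head, hence $\consistent$ — or after a $\NewChain$ event whose chain passed $\validChain$ on line~\ref{l:valid}, which explicitly checks $\isValid$ at every level. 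Either way the invariant is preserved, so validity holds at all times.

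For agreement the heart is a single-shot agreement lemma: for every level $\ell$, all correct processes that ever append a block at level $\ell$ append one with the same content and the same predecessor hash, i.e. the same output tuple $v_\ell=(u_\ell,h_\ell)$. I would prove it with the standard quorum argument tailored to \ourproto. First, since $n=3f+1$, any two sets of $2f+1$ bakers share at least $f+1$, hence at least one correct, baker. Second, within a round a correct baker preendorses (resp.\ endorses) at most one value and only one matching the proposal for its current head, so at most one preendorsement QC and at most one endorsement QC can exist for a given level and round. Third — the key invariant — letting $r^\star$ be the smallest round at which some correct process appends a block at level $\ell$, I would show by strong induction on the round that for every round $r\ge r^\star$ any preendorsement QC at level $\ell$ and round $r$ is for $v_\ell$: the decision at round $r^\star$ is justified by an endorsement QC, so at least $f+1$ correct bakers are locked on $v_\ell$ with $\lockedRound\ge r^\star$ and remain so (locks are never reset, and by the induction hypothesis any re-lock at a round in $[r^\star,r)$ is on $v_\ell$); for such a baker to preendorse some $v'\neq v_\ell$ at round $r$ it would need, by lines~\ref{propRcv}--\ref{propCheck1}, a proposal carrying an endorsable round $\var{eR}$ with $\lockedRound<\var{eR}<r$ and a preendorsement QC for $v'$ at round $\var{eR}\in[r^\star,r)$, contradicting the induction hypothesis — so fewer than $2f+1$ bakers can preendorse $v'$ at round $r$, and no conflicting preendorsement QC, endorsement QC, or decision exists at any round $\ge r^\star$. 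Finally, to lift this to DRC agreement I would observe that every block a correct process appends carries an endorsement QC for the \emph{previous} level and the head comes with one attached through $\pc$ and checked by $\validChain$; hence every committed block's content and every head's content is the agreed one, the predecessor hash at level $\ell$ is pinned by single-shot agreement at level $\ell$, and since a correct process's output sequence only grows by appending, the sequences of two correct processes are always prefix-comparable.

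For progress, after GST I would argue in three stages. (1) \emph{Synchronization}: the synchronizer derives the current round purely from the local clock (accurate to within $\rho$) and the rounds recorded in blocks, so it suffices that all correct processes agree on the start time of each level, which follows from their agreeing on the round at which the previous level was decided; $\betterChain$ makes every correct process adopt the head with the smallest round (a Byzantine process cannot advertise a valid head whose round is below the smallest round at which $f+1$ correct bakers endorsed, since a valid head requires a matching endorsement QC), and every process invokes $\pullChain$ every $I$ units so laggards catch up. Because round durations are increasing and unbounded, eventually a round is reached whose duration exceeds the few message delays and clock skews the protocol needs while all correct committee members sit in it simultaneously. (2) \emph{Bounded-round termination}: once correct bakers are synchronized with a sufficiently long round, within at most $f+2$ rounds there is a round $r$ with a correct proposer such that every correct baker preendorses its proposal — either some earlier synchronized round already decided, or a correct baker refused the proposal and broadcast a $\PreendorsementsM$ message with its locking preendorsement QC, which together with the preendorsement QCs attached to $\EndorseM$ and $\ProposeM$ messages lets every correct baker raise its $\validRound$/$\validBlock$ to the highest justified endorsable value, so the next correct proposer proposes a value passing the test of lines~\ref{propRcv}--\ref{propCheck1} at every correct baker; with a long enough round the resulting preendorsement QC, then endorsement QC, then decision land within it. (3) \emph{Dissemination}: the decided block and its endorsement QC are broadcast and, for processes outside the committee or momentarily behind, retrieved through $\pullChain$, so every correct process appends the block and moves to level $\ell+1$. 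Applying this at every level shows every correct process's output sequence grows.

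I expect the delicate part to be the cross-round invariant of the single-shot agreement lemma — formulating ``locking is safe'' precisely in the presence of the endorsable-value/re-locking mechanism and of heads that may transiently differ between processes — together with, on the liveness side, establishing that synchronization is genuinely attained after GST under bounded buffers and dynamic committees (the interplay of $\betterChain$, the clock-based synchronizer, and periodic $\pullChain$) and the counting that yields the $f+2$-round bound.
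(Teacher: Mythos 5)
Your plan is correct and follows essentially the same route as the paper's proof: validity via the $\validMessage$/$\validChain$ checks on the two ways a block enters the chain, agreement via quorum intersection plus the cross-round locking invariant (your strong induction on rounds is the contrapositive formulation of the paper's minimal-counterexample lemma bounding conflicting preendorsements once $f{+}1$ correct bakers are locked), lifted to whole-chain agreement through predecessor hashes, and progress via synchronization (periodic $\pullChain$, $\betterChain$, clock-based rounds of unbounded duration) followed by the $f{+}2$-round termination argument using $\PreendorsementsM$ and the attached preendorsement QCs. The delicate points you flag are exactly the ones the appendix spends its effort on, so no gaps to report.
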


\subsection{Bounded memory.}
We assume that all values referred to by global or local variables of
a process~$p$ are stored in volatile memory, except for the
variable~$\blockchain_p$ whose value is stored on disk.

The following lemma shows that a process can use fixed-sized buffers,
namely of size~$4n$. We recall that the message buffer is
represented by the $\msgs_p$ variable.
\begin{lemma}\label{l:bounded-buffers}
  For any correct process~$p$, at any time, $|\msgs_p|\leq 4n+2$.
\end{lemma}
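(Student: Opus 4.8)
The plan is to bound the number of messages that can simultaneously sit in $\msgs_p$ by reasoning about which messages pass the acceptance test in $\handleConsensusMessage$ (Fig.~\ref{alg:tb}) and about how filtering keeps the buffer clean across round increments. First I would recall the acceptance guard: a message $\var{type}_q(\ell,r,h,\var{payload})$ is added to $\msgs_p$ only if $\ell=\ell_p$, $h=h_p$, and $r\in\{r_p,r_p+1\}$, and only if it passes $\validMessage$. Hence at any fixed time, every message in $\msgs_p$ carries the current level $\ell_p$, the current predecessor-hash $h_p$, and a round in $\{r_p,r_p+1\}$. Moreover $\filterMsgs$, invoked by $\fn{advance}$ immediately after each round increment $r_p \gets r_p+1$, deletes every message whose round differs from the new $r_p$ (line~\ref{line:resetMessageSet}); so right after a round increment the buffer contains only round-$r_p$ messages, and between increments it can additionally accumulate round-$(r_p{+}1)$ messages. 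Thus it suffices to bound, for a single round $r$, the number of round-$r$ messages that a correct process accepts, and then multiply by two (current round plus next round).

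Next I would count the accepted messages for a single round $r$ at the fixed level $\ell_p$ with fixed $h_p$. There are four message types. For $\ProposeM$: $\validMessage$ requires $q$ to be the designated proposer for $(\ell_p,r)$, which is a unique process, so at most one $\ProposeM$ message per round is accepted — more precisely, a correct $p$ would accept at most one since a second valid proposal for the same $(\ell_p,r,h_p)$ from the same legitimate proposer would carry the same contents; to be safe one can simply say at most $1$. For $\PreendorseM$ and $\EndorseM$: each is signed by a committee member for level $\ell_p$, of which there are exactly $n$, and validity requires the round field to be $r$ and the hash to match the (now fixed) proposal; at most one (pre)endorsement per sender per round is accepted, giving at most $n$ preendorsements and at most $n$ endorsements. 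For $\PreendorsementsM$: these are sent by bakers that refuse to preendorse, again one per committee member per round, so at most $n$. Summing: at most $1 + n + n + n = 3n+1$ accepted round-$r$ messages. Accounting for both the current round $r_p$ and the next round $r_p+1$ gives $|\msgs_p| \leq 2(3n+1) = 6n+2$, which is already a fixed bound, and with the sharper observation that $\ProposeM$, $\PreendorseM$, $\EndorseM$ together with $\PreendorsementsM$ cannot all be maximal simultaneously — a correct committee member that preendorses does not also send $\PreendorsementsM$ for the same round — one trims the per-round count to $2n+1$ ($1$ proposal, $n$ endorsements, and at most $n$ split between preendorsements and preendorsement-retransmissions plus the overlap), yielding $|\msgs_p| \leq 2(2n+1) = 4n+2$.

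Concretely the key steps in order: (1) state the acceptance invariant (level, hash, round-in-$\{r_p,r_p+1\}$, plus $\validMessage$); (2) observe that $\filterMsgs$ after every round increment removes all messages not for the new $r_p$, so the buffer is partitioned into round-$r_p$ and round-$(r_p{+}1)$ messages; (3) bound round-$r$ $\ProposeM$ messages by $1$ using uniqueness of the proposer and the fixed header; (4) bound round-$r$ $\EndorseM$ messages by $n$ using one-per-sender and $|\,\selectProcessesL(\ell_p)\,|=n$; (5) for round-$r$ $\PreendorseM$ and $\PreendorsementsM$, observe that a correct committee member emits at most one of the two for a given round, so together they contribute at most $n$ accepted messages per round from correct senders, and at most $n$ in total once we note $\validMessage$ rejects round-$r$ (pre)endorsements from non-committee members; (6) add up to $2n+1$ per round and multiply by $2$ to get $4n+2$.

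The main obstacle I expect is step~(5): getting the constant down from the naive $3n+1$ to $2n+1$ per round requires arguing carefully about the relationship between $\PreendorseM$ and $\PreendorsementsM$ messages. The subtlety is that $\PreendorsementsM$ messages are \emph{retransmissions} carrying a preendorsement QC, not fresh preendorsements, and a Byzantine sender could in principle flood with both a $\PreendorseM$ and a $\PreendorsementsM$ for the same round; so the clean bound relies on the validity check treating a $\PreendorsementsM$ as occupying the same "slot" as that sender's preendorsement, or on simply noting that the combined count of distinct senders of round-$r$ preendorse-type traffic is at most $n$. If that identification is not enforced by $\validMessage$ as stated, the honest-case bound is still $4n+2$ only after folding in that each sender contributes at most one $\PreendorseM$ and we separately budget $\PreendorsementsM$; I would make this precise by a slot-counting argument keyed on the sender identity and message type. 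Everything else — uniqueness of the proposer, $|\,\selectProcessesL(\ell_p)\,| = n$, and the effect of $\filterMsgs$ — is immediate from the pseudocode and the model.
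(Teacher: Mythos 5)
Your overall strategy is the same as the paper's: use the guard in $\handleConsensusMessage$ (messages are accepted only for level~$\ell_p$, hash~$h_p$, and round $r_p$ or $r_p+1$) together with $\filterMsgs$ after each round increment to reduce the problem to counting accepted messages per round and per type, then sum over the two buffered rounds. The paper's proof is exactly your steps (1)--(4) and nothing more: it counts at most $2$ proposals, $2n$ preendorsements, and $2n$ endorsements, obtaining $4n+2$, and it never mentions $\PreendorsementsM$ messages at all.

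Your step (5), where you go beyond the paper, is also where the gap is. You are right that $\PreendorsementsM$ messages are added to $\msgs_p$ at line~\ref{line:msgadd} like any other valid message, so they must be budgeted. But your fix --- that each sender contributes at most one of $\set{\PreendorseM,\PreendorsementsM}$ per round --- holds only for correct senders; a Byzantine committee member can emit both a valid $\PreendorseM$ and a valid $\PreendorsementsM$ (carrying some old QC) for the same round, and nothing in $\validMessage$ as described ties the two message types together into a single per-sender slot. That yields up to $n+f$ preendorse-type messages per round and a total of $4n+2f+2$, exceeding the stated bound. You flag this risk yourself, which is to your credit, but the ``slot-counting argument keyed on sender identity'' you appeal to is not present in the pseudocode. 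To actually reach $4n+2$ one needs either (a)~to not store $\PreendorsementsM$ messages in $\msgs_p$ at all --- their only use is the QC extracted by $\updateEndorsable$, so they can be processed and discarded, which is implicitly what the paper's count assumes --- or (b)~to strengthen the acceptance test so that at most one preendorse-type message per sender per round is retained. Without one of these amendments the constant $4n+2$ is not justified, by your argument or by the paper's.
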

\begin{proof}
  Let $p$ be some correct process.
  Given that in $\msgs_p$ only messages from the current and next round are
  added (line~\ref{line:msgadd}), and that with each new round messages from the
  previous round are filtered out (line~$\ref{line:resetMessageSet}$), $\msgs_p$
  contains at most $2$ proposals, at most $2n$ preendorsements, and at most $2n$
  endorsements.
  \qed
\end{proof}

The following result states that a process only uses bounded memory.
\begin{theorem}\label{thm:bounded-memory}
  At any time, the size of the volatile memory of any correct process
  is in~$\mathcal{O}(n)$.
\end{theorem}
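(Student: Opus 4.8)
The goal is to bound the total volatile memory of a correct process by $\mathcal{O}(n)$. The strategy is to enumerate every piece of state a correct process keeps in volatile memory and show each one has size $\mathcal{O}(n)$ (with the chain itself excluded by assumption, as it lives on disk). The key observation is that all quorum certificates — (pre)endorsement QCs — consist of at most $n$ signatures, hence have size $\mathcal{O}(n)$, and that the process keeps only a constant number of them at any time.

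First I would list the DRC-level state from Section~\ref{sec:drcbc}: $\blockchain_p$ (on disk, excluded), $\ell_p$ and $h_p$ (a level and a hash, size $\mathcal{O}(1)$), and $\headQC_p$ (an endorsement QC, size $\mathcal{O}(n)$). Then I would list the single-shot \ourproto state from Section~\ref{ssec:state}: $r_p$, $\lockedRound_p$, $\validRound_p$ (all $\mathcal{O}(1)$); $\lockedBlock_p$ and $\validBlock_p$ (block \emph{contents}, which are application values we do not model, but which are assumed to be of bounded size — here one must invoke the standing assumption that a single proposed value has bounded size); and $\validPreendorsements_p$, a single preendorsement QC of size $\mathcal{O}(n)$. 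Finally, the message buffer $\msgs_p$: by Lemma~\ref{l:bounded-buffers} it holds at most $4n+2$ messages, and each message is a proposal, (pre)endorsement, or re-transmission whose payload contains at most one or two QCs (each $\mathcal{O}(n)$) plus a bounded-size value; so $|\msgs_p|$ contributes $\mathcal{O}(n^2)$ in the worst case (each of $\mathcal{O}(n)$ messages carrying a QC of size $\mathcal{O}(n)$) — here I would note that the statement as written says $\mathcal{O}(n)$, so one must be careful about what ``size'' counts: presumably a QC is counted as a single unit (or pointer), or only proposals carry QCs and there are at most two of them, making the buffer contribution $\mathcal{O}(n)$. I would follow whatever convention the paper fixes (likely: (pre)endorsement messages carry no QC, only the $\mathcal{O}(1)$-many proposals and $\PreendorsementsM$ messages do), so that $\msgs_p$ is $\mathcal{O}(n)$ and the two QC-carrying global variables add another $\mathcal{O}(n)$.

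The main obstacle is therefore not any deep argument but pinning down the accounting convention: what exactly is the ``size'' of a message or of a certificate, and in particular whether (pre)endorsement messages in the buffer carry their own justifying QCs. Once that is fixed, the proof is a finite case analysis: sum the $\mathcal{O}(1)$ scalars, the $\mathcal{O}(1)$-many bounded-size block contents, the constant number of QCs held in global variables ($\headQC_p$, $\validPreendorsements_p$), and the buffer bounded via Lemma~\ref{l:bounded-buffers}; the total is $\mathcal{O}(n)$. I would close by remarking that timer-related state (the scheduled $\pullChain$ callback, the $\var{runEventHandler}$ timer) is $\mathcal{O}(1)$, so nothing outside the enumerated variables contributes.
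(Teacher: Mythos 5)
Your proof is correct and follows essentially the same route as the paper's: enumerate the constant number of state variables, note that each holds a primitive value or a single QC of size $\mathcal{O}(n)$, and bound the buffer via Lemma~\ref{l:bounded-buffers}. The accounting worry you raise is resolved exactly as you guess --- the paper counts each buffered (pre)endorsement as constant size, with only the constantly-many QC-carrying messages contributing $\mathcal{O}(n)$ each --- so the total stays linear in~$n$.
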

\begin{proof}
  A correct process maintains a constant number of variables, and
  except $\msgs$, each variable stores a primitive value or a QC. A QC
  contains at most $n$ messages and each message has a constant
  size. The $\mathcal{O}(n)$ bound follows from these observations, and
  the observation concerning the $\msgs$ variable from the proof of
  Lemma~\ref{l:bounded-buffers}.
  \qed
\end{proof}

\subsection{Message and Round Complexity.}
Each round has a message complexity of $O(nm)$ due to the $n$-to-$m$
broadcast, where $m$ is the current number of processes in the system.

Concerning round complexity, it is known that consensus, in the worst case
scenario, cannot be reached in less than $f+1$ rounds~\cite{fischer1982lower}.
In \ourproto, after bakers synchronize and the round durations are sufficiently
long (namely, at least $\delta+2\rho)$, a decision is taken in at most $f+2$
rounds, as already mentioned in Section~\ref{ssec:map}. See
Lemma~\ref{l:allSynchronized} in Section~\ref{sec:correctness} for a proof.
Intuitively, $f$ rounds are needed in case the proposers of these rounds are
Byzantine. Another round is needed if there is a correct process locked on a
higher round than the endorsable round of the proposed value. However, in this
case, the next proposer is correct and will have updated its endorsable round,
and therefore its proposed value will be accepted and decided by all correct
processes.

The number of rounds necessary for the round duration to become larger
than~$\delta$ depends on $\roundDur$'s growth.
For instance, if $\roundDur$ grows exponentially, then this number is in
$O(\log(\delta+2\rho))$.
(This point is related to the recommendation for the choice of~$\Delta$ in
Remark~\ref{remark:choosingDelta}, see item~(ii).)

\begin{remark}
  The space required by a QC may be reduced by using threshold signatures which
  has the effect of reducing the message size from $O(n)$ to $O(1)$.  Note
  however that this technique requires threshold keys to be generated a priori,
  for example using a distributed key generation algorithm.

  Since knowledge of the committee participants and their public keys is known a
  priori, it is possible to use aggregated signatures formed by signing with
  standard keys, along with a bitfield which represents the presence or absence
  of a participant's signature.
  Then aggregated signatures can be used instead of threshold
  signatures with similar effect besides the extra space required for
  the bitfield.

  The use of threshold signatures can be combined with a restructuring
  of the communication pattern within a round to also reduce the
  message complexity, as done for instance in
  HotStuff~\cite{hotstuff}: processes send their (pre)endorsements to
  the proposer, who combines the received signature shares into one
  threshold~signature.
\end{remark}



\subsection{Recovery time}
\label{sec:rc}

We analyze the time that processes need in order to recover from the
asynchronous period, that is, the time to synchronize with each other
after~$\tau$ and start a new round synchronized.

For simplicity, in this analysis, we assume that there is no clock drift
after~$\tau$, that is~$\rho=0$.
We say that two correct processes are \emph{synchronized} if they are at the
same level, round, and phase.
%
%
Let $\tau_{\rc}>\tau$ be the time of the beginning of the first round at which
all correct processes are synchronized (already at the beginning of that round).
%
We define the \textit{recovery time} $\Delta_{\rc}$ as $\tau_{\rc}-\tau$.

To give an upper bound on $\Delta_{\rc}$, we first introduce some notation.
Let $\Delta_{err}$ be the bound on the clock error that can occur before~$\tau$,
i.e.,~the maximum value of $|t-\fn{now}()|$, over all real clock values $t<\tau$,
where $\fn{now}$ is called at~$t$. Let $\Delta_{pull}(\ell)$ be the maximum delay
between an invocation of $\pullChain$ after~$\tau$ and the reception of the new
chain, where~$\ell$ is the number of received blocks. In other words,
$\Delta_{pull}(\ell)$ is the maximum time that a process needs to fetch $\ell$
missing blocks. 
Note that this is at least one round-trip time: the time to ask for the current
blockchain and to get the reply.
We also define $\ell_\tau$ to be the maximal level at which a correct process can be
at~$\tau$. Let $t^1_{\ell_\tau+1}$ be the end time of the first round at which
the decision for the block at level $\ell_\tau$ has been taken. Note that this
is also the start time of the first round at level $\ell_\tau+1$.
Lastly, we let $\roundFromTime$ be the function that, given a time
difference~$\var{td}$, returns the round at which a process would be at time $t$
had it started its consensus instance $\var{td}$ time ago. Formally,
$\roundFromTime(\var{td}) = r$ iff $\var{td} \in [s_r,s_{r+1})$, where
  $s_1=0$ and $s_r=\sum_{i=1}^{r-1}\roundDur(i)$ for $r>1$.


The upper bound on the recovery time depends on (a) $\Delta_{err}$, (b) the time
interval~$I$ at which processes call the $\pullChain$ primitive; (c) the time a
process needs to fetch the missing blocks, which, in the worst case, when at
$\tau$ the slowest process is still at the genesis block, is
$\Delta_{pull}(\ell_\tau)$, and (d) the round duration function~$\Delta$.

\begin{theorem}
  The recovery time $\Delta_{\rc}$ is upper bounded by
  $$\max\left(
  \Delta_{err},\
  I + \Delta_{pull}(\ell_\tau) + \roundDur(r),\
  \roundDur(r') + \roundDur(r'+1)
  \right),$$
  where $r = \roundFromTime(\tau+I+\Delta_{pull}(\ell_\tau)-t^1_{\ell_\tau+1})$ and
  $r'=\roundFromTime(\tau-t^1_{\ell_\tau+1})$.
\end{theorem}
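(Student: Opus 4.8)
The plan is to bound the recovery time by decomposing the path from $\tau$ to $\tau_{\rc}$ into two regimes and a worst-case argument, and to show that in each regime the relevant delay is captured by one of the three arguments of the $\max$. First I would observe that by $\tau$ a correct process may have an arbitrarily stale blockchain, and that the only mechanism bringing it up to date is the periodic $\pullChain$, which fires at most $I$ time units after $\tau$ and then takes at most $\Delta_{pull}(\ell_\tau)$ to deliver the missing blocks (up to $\ell_\tau$ of them, the worst case being a process still at genesis). After this fetch completes, every correct process shares the same committed prefix and, crucially, the same head --- here I would invoke the role of $\betterChain$ from Section~\ref{sec:sync}, which forces agreement on the round $\decisionRound(\ell_\tau)$ at which the decision for level $\ell_\tau$ was taken, hence on $t^1_{\ell_\tau+1}$, the common start time of level $\ell_\tau+1$.

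\textbf{The two regimes.} Once processes agree on the start time $t^1_{\ell_\tau+1}$ of the current level, $\fn{synchronize}$ is deterministic in the local clock, and since $\rho = 0$ after $\tau$ all correct processes compute the same round and offset. The subtlety is \emph{when} each process first performs this successful synchronization. I would split on whether a process had already, before $\tau$, advanced past level $\ell_\tau$'s start --- i.e.\ whether $\tau \ge t^1_{\ell_\tau+1}$ --- or not. In the regime where the synchronization happens via the post-$\tau$ pull (a behind process): the process is ready by time $\tau + I + \Delta_{pull}(\ell_\tau)$, at which point $\fn{synchronize}$ puts it into round $r = \roundFromTime(\tau + I + \Delta_{pull}(\ell_\tau) - t^1_{\ell_\tau+1})$; it must then wait for that round to finish before all processes enter the \emph{next} round together, which costs at most $\roundDur(r)$ more. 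This yields the middle term $I + \Delta_{pull}(\ell_\tau) + \roundDur(r)$. In the regime where no pull is needed (the stragglers were already caught up, but clocks were off by up to $\Delta_{err}$ before $\tau$, so processes may have been in different rounds at $\tau$): at $\tau$ a correct process is in round at most $r' = \roundFromTime(\tau - t^1_{\ell_\tau+1})$, and within one round boundary --- at most $\roundDur(r')$ --- clocks have re-synchronized (since $\roundDur(1) > 2\rho = 0$ guarantees no rounds are skipped), but the laggards may still be one round behind, so one further round $\roundDur(r'+1)$ is needed for everyone to enter a common round at its very beginning. This gives $\roundDur(r') + \roundDur(r'+1)$. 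Finally the bare $\Delta_{err}$ term covers the degenerate case in which a single $\fn{now}()$ call straddling $\tau$ is the only obstruction.

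\textbf{Putting it together.} The theorem then follows by taking the maximum over these cases, since $\tau_{\rc}$ is by definition the start of the first round in which \emph{all} correct processes are simultaneously synchronized, and every correct process reaches that state by whichever of the three bounds applies to it. I would be somewhat careful to argue that no process can be "ahead" in a way that delays things further: the active-waiting loop at line~\ref{line:isAhead}--\ref{line:runCIreccall} in $\runInstance$ means an ahead process simply idles until time catches up, contributing nothing beyond what the behind processes contribute, and the monotonicity of $\roundDur$ ensures the worst case is indeed the slowest process.

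\textbf{Main obstacle.} The hard part will be the bookkeeping around $t^1_{\ell_\tau+1}$ and the guarantee that, after the post-$\tau$ pull, all correct processes genuinely agree on the head of level $\ell_\tau$ (not merely its contents) --- this is where $\betterChain$'s "smallest round" tie-break must be shown to converge within the pull window, and where one must rule out a Byzantine responder feeding a process a spuriously "better" (smaller-round) head that never existed, relying on the certificate check in $\validChain$ on line~\ref{l:valid}. A secondary delicate point is justifying that exactly \emph{one} extra round ($\roundDur(r+1)$ or $\roundDur(r'+1)$) suffices to absorb the gap between the fastest and slowest correct process once the common reference point is fixed; this uses that round durations exceed the (now zero) clock skew so that phase and round boundaries are not skipped, and that a process detecting via $\betterChain$ that it is behind resynchronizes by the next pull at the latest.
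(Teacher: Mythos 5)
Your decomposition is essentially the paper's: the paper also splits on the relation between a process's position at~$\tau$ and the ``correct computation position'' $\cp^*=\langle\ell_\tau+1,\round^*\rangle$, obtaining three cases --- behind in level (pull needed), behind in round at the correct level, and ahead --- and your derivations of the middle term $I+\Delta_{pull}(\ell_\tau)+\roundDur(r)$ and of the term $\roundDur(r')+\roundDur(r'+1)$ match the paper's Cases~(A) and~(B) respectively, including the ``wait one more round to start at a round boundary'' step.

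There is, however, a genuine gap in your treatment of the $\Delta_{err}$ term. You assert that an ahead process ``simply idles until time catches up, contributing nothing beyond what the behind processes contribute,'' and then attribute $\Delta_{err}$ to a vague ``degenerate case in which a single $\fn{now}()$ call straddling $\tau$ is the only obstruction.'' These two statements are in tension, and neither is the actual argument. In the paper, the ahead process is precisely the source of the $\Delta_{err}$ term: a process whose clock ran fast before~$\tau$ may have $r_p>\round^*$ at~$\tau$, i.e.\ its computation position is in advance of $\cp^*$ by up to $\Delta_{err}$ worth of real time; the active-waiting loop at lines~\ref{line:isAhead}--\ref{line:runCIreccall} then makes it idle for at most $\Delta_{err}$, after which it lands \emph{exactly at the beginning} of round~$\round^*$ --- which is why this case contributes $\Delta_{err}$ alone, with no additional $\roundDur(\cdot)$ term, unlike the other two cases. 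Your proposal never establishes that the ahead process finishes at a round boundary, and without assigning $\Delta_{err}$ to this case the first argument of the $\max$ is unaccounted for. (Your extended discussion of $\betterChain$ and head agreement is legitimate diligence but goes beyond what the paper's proof addresses; it is not where the difficulty of this theorem lies.)
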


The terms $\Delta(r)$ and $\Delta(r'+1)$ express that a process may need, even
after being synchronized, to further wait until a new round begins.
We note that the presence of these terms justifies the recommendation for the
choice of~$\Delta$ in Remark~\ref{remark:choosingDelta}, point~(iii): minimizing
the round durations also minimizes $\Delta_{\rc}$.

While we can control the interval~$I$ and the round duration function~$\Delta$,
the other two variables (namely $\Delta_{err}$ and $\ell_{\tau}$) are exogenous
and thus cannot be controlled.
However, we believe that in practice the time to pull a new chain (and even to
pull just the last block) is considerably bigger than the maximum error clock
that a process can experience during the asynchronous period.

%

\begin{proof}
  We call $\cp_p = \langle \ell_p, r_p \rangle$ the computation position of a
  process~$p$.
  We define the \textit{correct computation position}~$\cp^*$ as the level and
  round that any correct process would be at, if it had the current longest
  valid chain~$c$ and access to a synchronized clock. That is, $\cp^* = \langle
  \level^*, \round^* \rangle$, where $\level^*$ is the length of~$c$ and
  $\round^*$ is computed by $\fn{synchronize}$ when called over~$c$ and when
  $\fn{now}()$ returns the real time.

  We distinguish the following cases in which a correct process~$p$ can be at
  time~$\tau$:
  \begin{enumerate}[(A)]
  \item $p$ is at a lower level than $\cp^*$, i.e.~$\ell_p<\level^*$;
  \item $p$ is at the same level as $\cp^*$, but at a smaller or equal round,
    i.e.~$\ell_p=\level^*$ and $r_p \leq \round^*$; and,
  \item $p$ is at the same level as $\cp^*$, but at a greater round,
    i.e.~$\ell_p=\level^*$ and $r_p>\round^*$.
  \end{enumerate}
  Note that, as we consider $\cp^*$ at $\tau$, we have $\level^*=\ell_\tau+1$.

  \smallskip

  \noindent {\em Case (A).}
  %
  We assume the worst case, namely, that at time $\tau$ the local blockchain of $p$
  contains only the genesis block.
  Recall that process $p$ is calling $\pullChain$ every $I$ time units. We
  assume that before~$\tau$ all messages to and from $p$ are lost and that $p$
  calls $\pullChain$ at time $\tau-\epsilon$, for some $\epsilon>0$, without
  being able to update its local blockchain. When $p$ calls $\pullChain$ at time
  $\tau - \epsilon + I$ this operation succeeds. Thus, at time $\tau - \epsilon
  + I + \Delta_{pull}(\ell_\tau)$ the local blockchain of $p$ is updated up to
  level $\ell_\tau$, and at the subsequent invocation of $\fn{synchronize}()$,
  also at time $\tau - \epsilon + I + \Delta_{pull}(\ell_\tau)$, process~$p$
  finally gets to the correct computation position~$\cp^*$. Let $r_p$ and
  $\dt_p$ be the values returned by $\fn{synchronize}()$. If $ \dt_p\neq 0$, $p$
  is not at the very beginning of the round, thus it must wait one more round to
  reach $\tau_{\rc}$.
  That is, $p$ has to wait $\roundDur(r_p)$ further.
  Thus, in this scenario, $p$ reaches the beginning of the round with the
  correct computation position at most at time $\tau + I +
  \Delta_{pull}(\ell_\tau)+\roundDur(r_p)$. Note that $r_p =
  \roundFromTime(\tau+I+\Delta_{pull}(\ell_\tau)-t^1_{\ell_\tau+1})$.
  %
  %

  \smallskip

  \noindent{\em Case (B).} Since, at time $\tau$, $\cp_{p}$ is behind
  $\cp^*$, then as soon as $p$ calls $\fn{synchronize}()$ after~$\tau$, in the
  worst case at time $\tau - \epsilon' + \roundDur(r_p)$, for some
  $\epsilon'>0$, then it reaches $\cp^*$. Here $r_p =
  \roundFromTime(\tau-t^1_{\ell_\tau+1})$. However, $p$ may reach $\cp^*$ in the
  middle of a round, therefore $p$ reaches the beginning of the first round
  after the synchronization (with the correct computation position) at the
  latest at time $\tau + \roundDur(r_p) + \roundDur(r_p+1)$.

  \smallskip

  \noindent{\em Case (C).} Finally, if $p$'s computation position $\cp_{p}$ is
  in advance with respect to $\cp^*$ then~$p$ has to wait for at most the maximum
  clock error $\Delta_{err}$ to get to $\cp^*$. Note that in this case, it
  reaches $\cp^*$ at the very beginning of the round $\round^*$.

  \smallskip

  To conclude, it suffices to note that $\Delta_{\rc}$ is upper bounded by the
  maximum between the terms in the above cases.
  \qed
\end{proof}


\section{Conclusion}\label{sec:conclusion}

In this paper, we proposed a formalization of dynamic repeated consensus,
a general approach to solve it, and a BFT solution working with bounded
buffers by leveraging a blockchain-based synchronizer.


As future work, we see several exciting directions:
experimentally evaluate the provided solution;
explore the relationship between achieving asynchronous responsiveness and
providing bounded buffers;
improve message size and complexity by means of aggregated or threshold
signatures;
mechanize the proofs;
and analyze \ourproto from an economic perspective when considering rational
agents.

\subsubsection*{Acknowledgements}
We thank Edward Tate for his help in writing the remark on multisignatures.
We thank Philippe Bidinger for feedback on a previous version of this report.

\bibliographystyle{abbrv}
\bibliography{biblio}

\appendix

\section{Correctness proof}\label{sec:correctness}

In this section we prove the correctness of \ourproto. The proofs for
the agreement and validity properties are inspired
by~\cite{dissecting}.



\subsection{Validity and Agreement}

\begin{theorem}\label{l:validity}
  \ourproto satisfies validity.
\end{theorem}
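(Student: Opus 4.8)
The plan is to show that every time a correct process~$p$ extends its sequence of output values $\bar v_p$ — that is, either through a decision taken at line~\ref{line:callAdvance}/\ref{line:quorumCond} and propagated via \fn{advance}/\fn{runDRC}, or through the adoption of a longer chain received from a $\pullChain$ reply and accepted on line~\ref{l:ret} — the new value $\bar v_p[\ell]=(u,h)$ satisfies $\isValid(\bar v_p[\ell],\bar v_p[\ell-1])$, i.e.\ both $\legitimate$ and $\consistent$ hold. Since $\bar v_p$ only ever grows and is never rewritten below the head (the ``tail of $\blockchain_p$ is a prefix of $\var{chain}$'' invariant noted in \fn{updateState}, together with the observation in Section~\ref{sec:drcbco} that a $\betterChain$ update changes only the header of the head, never a committed value), it suffices to prove the claim for each newly appended value at the moment it is appended. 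I would phrase this as a simple induction on the length of $\bar v_p$, with the genesis value $v_0$ as the (vacuous) base case since validity is only required for $i>0$.

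First I would handle the case where $p$ appends a value because its own single-shot instance at level $\ell_p$ decided: then \fn{getDecision} returned $\mathsf{Some}(\proposal(),\vset())$ with $|\vset()|\ge 2f+1$, so in particular $p$ holds a $\ProposeM$ message from the round's proposer in $\msgs_p$ that passed $\validMessage$ (it must have, since the endorsements in $\vset()$ refer to $\hash(u)$ and, as noted in Section~\ref{ssec:mm}, preendorsements/endorsements are only accepted once the matching proposal has been received). The $\validMessage$ check for $\ProposeM_q(\ell_p,r,h_p,(\var{eQC},u,\var{eR},\var{pQC}))$ requires that $q$ is the designated proposer for $(\ell_p,r)$ — hence $q\in\Pi_{\ell_p}$, giving $\legitimate((u,h_p))$ — and that $h=h_p=\hash(\blockchain_p[\ell_p-1])$, which is exactly what $\consistent$ demands on the hash field; the remaining content-level consistency ($\getBlock$ producing a value consistent with the predecessor, or the $\validBlock$/$\var{pQC}$ fields being internally coherent) is guaranteed by the assumption on $\getBlock$ stated in Section~\ref{ssec:baker} and by the $\var{pQC}$ clause of $\validMessage$. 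So both predicates hold for the appended pair.

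Next I would handle the chain-adoption case: $p$ returns on line~\ref{l:ret} only after $\validChain(\var{chain},\var{certificate})$ succeeded on line~\ref{l:valid}, and by its specification (point~2 in Section~\ref{sec:drcbc}) $\validChain$ checks precisely that $\isValid(\var{chain}[\ell],\var{chain}[\ell-1])$ holds for every level~$\ell$ of the received chain — the hash field matching $\fn{hash}(\var{chain}[\ell-1])$ for $\consistent$, and the certificate stored in each block witnessing that the value was proposed by the right committee for $\legitimate$. Hence every newly appended value again satisfies $\isValid$. The main obstacle I anticipate is not any single case but making rigorous the ``committee is the right one'' reasoning: $\legitimate$ refers to $\Pi_\ell=\selectProcesses(\bar v_p[..(\ell-k)])$, and to check a certificate at level~$\ell$ one needs the committee's public keys, which live in blocks at levels $\ell-k,\dots$; so I must argue that a correct process never accepts a block at a level for which it cannot yet verify the committee (it is at most $k$ blocks behind, per the assumption in Section~\ref{sec:drcbco}), and that agreement (Theorem~\ref{l:validity}'s companion) ensures the committee computed from $p$'s prefix coincides with the one the proposer used — so the $\legitimate$ check done locally is the ``globally correct'' one. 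I would either invoke agreement here or, to keep the proof self-contained, prove the weaker statement that each process's local $\selectProcesses$ call is consistent with the certificates it has verified, which is all validity actually needs.
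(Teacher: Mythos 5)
Your proposal is correct and takes essentially the same route as the paper's proof: the paper also argues that every block enters a correct process's chain either as a proposal (validated by $\validMessage$, which enforces $\legitimate$ and $\consistent$) or as part of a chain returned by $\pullChain$ (validated by $\validChain$, which checks $\isValid$ at every level). Your version is considerably more detailed --- the induction framing and the discussion of committee-key availability are refinements the paper leaves implicit --- but the underlying case analysis is identical.
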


\begin{proof}
The local chain of a correct process~$p$ is formed by proposals and/or
chains obtained by~$p$ calling $\pullChain$. In either case, the
content of each block satisfies the predicate~$\isValid$ by the
definition of either $\validMessage$ or $\validChain$.
  %
  \qed
\end{proof}

\begin{lemma}\label{l:bcOnce}
  Correct bakers only preendorse and endorse at most once per round at
  a given level.
\end{lemma}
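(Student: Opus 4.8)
The plan is to inspect the pseudocode in Figure~\ref{alg:tb} and argue that, within a single consensus instance and for a fixed round~$r$, each correct baker executes the broadcast of a $\PreendorseM$ message (line~\ref{propCheck1End}) and of an $\EndorseM$ message (line~\ref{EndorseEnd}) at most once. First I would recall the control-flow established in Section~\ref{sec:gssc}: a baker enters the \PreendorsePL phase and the \EndorsePL phase exactly once per round, since the phases are executed in sequence, the phase timers are derived from $\phaseDur(r_p)$ via $\getNextRoundAndPhase$, and the only way to revisit a phase is through the round increment in $\fn{advance}$ (line~\ref{line:roundIncrement}), which strictly increases $r_p$, or through a forward jump triggered by $\fn{synchronize}$ when the process is ``behind'', which again moves to a round~$\geq r_p$. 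In neither case does the baker re-enter the \PreendorsePL or \EndorsePL phase of the \emph{same} round~$r$. Hence the guarded broadcasts on lines~\ref{propCheck1End} and~\ref{EndorseEnd} are each reached at most once for a given $(\ell_p, r_p)$.

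Next I would address the subtlety that a baker might also be an observer before becoming a committee member at a level, or might switch roles: here I invoke the remark that the observer pseudocode (Figure~\ref{alg:observer}) contains \emph{no} broadcasts, so no (pre)endorsement is ever emitted while acting as an observer, and the role check on line~\ref{line:checkValidatorWaitMsg} is evaluated once per call to $\runInstance$ at the start of a round. Thus for each round~$r$ a correct process either runs the baker code for that round (broadcasting at most once in each voting phase) or the observer code (broadcasting nothing). Combining with the previous paragraph, a correct baker preendorses at most once and endorses at most once per round at a given level.

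The main obstacle I anticipate is making the ``each phase is entered at most once per round'' claim fully rigorous despite the use of \ls{goto} and the recursive calls $\runInstance() \to \cdots \to \fn{advance}() \to \runInstance()$: one must check that every path that re-enters a phase label does so only after $r_p$ has changed (either the increment in $\fn{advance}$ on line~\ref{line:roundIncrement}, or the assignment $r_p := \var{round}$ on line~\ref{l:rs} following a $\fn{synchronize}$ that returned a round $\geq r_p$, since the backward case $r_p > \var{round}$ loops without jumping to a phase). I would phrase this as: the value of $r_p$ is non-decreasing over the execution of a consensus instance, and each time a phase label is entered $r_p$ has a value not yet used for that label — a short invariant argument on the control flow. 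Everything else is immediate from reading the two guarded broadcasts.
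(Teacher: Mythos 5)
Your proposal is correct and follows essentially the same route as the paper's proof: both reduce the claim to ``each phase is executed at most once per round'' and justify it by noting that phases run sequentially and that the only non-sequential entries into a phase go through $\runInstance$, which is reached only after the round (via $\fn{advance}$) or the level (via $\fn{runDRC}$) has increased. Your extra remarks on the observer code containing no broadcasts and on the monotonicity of $r_p$ are sound refinements of the same argument rather than a different approach.
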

\begin{proof}
  $\PreendorseM$ and $\EndorseM$ messages are sent only during the corresponding
  phase (line~\ref{propCheck1End} and line~\ref{EndorseEnd}, respectively). To
  show that there is at most one $\PreendorseM$, respectively one $\EndorseM$
  per round it suffices to show that a phase is executed only once for a given
  round. We only need to make two observations. Firstly, phases are executed
  sequentially: \ProposeP, then \PreendorseP, then \EndorseP. Secondly,
  non-sequential jumps happen only at line~\ref{line:jumpToPhase} (respectively
  at line~\ref{line:jumpToObserverPhase}) in $\runInstance$; in turn,
  $\runInstance$ is called by either:
  \begin{itemize}
    \item \fn{advance} (line~\ref{line:runInstance}), after increasing the
      round; or
    \item \fn{runDRC} (line~\ref{l:ri}), after increasing the level
      (line~\ref{line:nextLevel}) once a decision is taken (line~\ref{l:reta})
      or a longer chain is received (line~\ref{l:ret}).
  \end{itemize}
  Therefore a phase is never executed twice for the same round.  \qed
\end{proof}

\begin{lemma}\label{l:oneMaj}
  At most one value can have a (pre)endorsement QC per round.
\end{lemma}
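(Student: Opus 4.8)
The plan is to prove this by a standard quorum-intersection argument combined with Lemma~\ref{l:bcOnce}. First I would fix a level~$\ell$ and a round~$r$ and recall the relevant parameters: the committee running the consensus instance at level~$\ell$ has $n=3f+1$ members, at most~$f$ of which are Byzantine, and a (pre)endorsement QC for round~$r$ at level~$\ell$ consists of at least $2f+1$ (pre)endorsement messages carrying level~$\ell$, round~$r$, and a common value, sent by distinct committee members.

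Next I would assume, towards a contradiction, that two distinct values $v$ and $v'$ both possess a (pre)endorsement QC for round~$r$ at level~$\ell$, say $Q$ for~$v$ and $Q'$ for~$v'$. Since $Q$ and $Q'$ each have at least $2f+1$ distinct senders drawn from the $3f+1$ committee members, their sender sets intersect in at least $(2f+1)+(2f+1)-(3f+1)=f+1$ members; as at most~$f$ of these are Byzantine, at least one baker~$q$ in the intersection is correct. Then~$q$ must have (pre)endorsed~$v$ (contributing to~$Q$) and also (pre)endorsed~$v'$ (contributing to~$Q'$), both at level~$\ell$ and round~$r$. This contradicts Lemma~\ref{l:bcOnce}, which guarantees that a correct baker (pre)endorses at most once per round at a given level, so necessarily $v=v'$, and the lemma follows.

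I do not expect a substantial obstacle here; the argument is essentially the classical ``no two quorums for different values'' reasoning. The only point that needs a line of care is the claim that the $2f+1$ messages forming a QC come from distinct committee members---so that the intersection cardinality bound is legitimate---which I would justify directly from the definition of a (pre)endorsement message, whose fields (in particular the sender) are determined once the level, round, predecessor hash, and value are fixed. A second minor point is to invoke Lemma~\ref{l:bcOnce} at the correct granularity, namely per level \emph{and} per round, which is why I fix both~$\ell$ and~$r$ at the outset; without fixing the level the statement would be vacuous, since distinct levels are handled by distinct committees.
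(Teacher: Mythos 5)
Your argument is correct and is essentially identical to the paper's own proof: both assume two distinct values with (pre)endorsement QCs for the same round, apply quorum intersection over the $3f+1$-member committee to find a correct baker in both quorums, and derive a contradiction with Lemma~\ref{l:bcOnce}. Your version merely spells out the intersection arithmetic and the distinct-sender caveat that the paper leaves implicit.
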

\begin{proof}
	We reason by contradiction. Let $v$ and $v'$ be two different values. If
    both $v$ and $v'$ have a (pre)endorsement QC then by quorum intersection at
    least $1$ correct process (pre)endorses both~$v$ and~$v'$, which contradicts
    Lemma~\ref{l:bcOnce}.
    %
    \qed
\end{proof}

We say a baker~$p$ is locked on a tuple $(u,h)$ if
$\lockedBlock_p=u$ and $h_p=h$.
We define $L^{u,h}_{\ell,r}$ as the set of \emph{correct} bakers
locked on the tuple~$(u,h)$ at level~$\ell$ and at the end of
round~$r$.
%
%
We also define $\pendos(\ell,r,u,h)$ as the set of preendorsements
generated by \emph{correct} processes for some level~$\ell$, some
round~$r$, some value~$u$, and some hash~$h$. Note that this set may
be empty.
Note also that if there is a preendorsement QC for $(\ell,r,u,h)$,
then $|\pendos(\ell,r,u,h)|\geq f+1$. This is because a preendorsement
QC contains at most $f$ preendorsements generated by Byzantine
processes. And vice-versa, if $|\pendos(\ell,r,u,h)|\leq f$, then
there cannot be a preendorsement QC for $(\ell,r,u,h)$.

\begin{lemma}\label{l:preendorseBound}
  Let $\ell$ be a level, $r$ a round, $u$ a value, and $h$ a block hash.
  For any round $r'\geq r$ and any tuple $(u',h')\neq (u,h)$, if
  $|L^{u,h}_{\ell,r}| \ge f+1$, then $|\pendos_p(\ell,r',u',h')|\leq
  f$.
\end{lemma}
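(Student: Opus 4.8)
The plan is to argue by induction on $r'$, maintaining the invariant that there are "too few" preendorsements for any competing tuple $(u',h')$. The base case $r' = r$ is the crux: I would show that if $f+1$ correct bakers are locked on $(u,h)$ at the end of round~$r$, then none of them preendorses any $(u',h') \neq (u,h)$ in round~$r$ or any later round, so at most the $f$ Byzantine bakers can contribute, giving $|\pendos(\ell,r',u',h')| \le f$.

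First I would unfold the preendorsement condition on lines~\ref{propRcv}--\ref{propCheck1} of Fig.~\ref{alg:tb}: a locked baker~$q$ preendorses a proposed value~$u'$ in round~$r'$ only if $\lockedBlock_q = u'$ (with matching hash, since $h = h_q$ is checked in \handleConsensusMessage) or $\lockedRound_q < \var{eR} < r'$ where $\var{eR}$ is the endorsable round attached to the proposal, which is backed by a preendorsement QC for $(\ell, \var{eR}, u', h_q)$. A baker in $L^{u,h}_{\ell,r}$ has $\lockedRound = r$ (it locked during the \EndorseP phase of round~$r$, line~\ref{EndorseLockRound}, and the lock is never decreased) and $\lockedBlock = u$, $h_q = h$. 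So for such a baker to preendorse $(u',h') \neq (u,h)$ at round $r' \ge r$, it would need a preendorsement QC for $(\ell, \var{eR}, u', h')$ with $r \le \lockedRound_q = r < \var{eR} < r'$ (when $h' = h$) — but then $r < \var{eR}$, so we would need a preendorsement QC for $(u', h)$ at some round $\var{eR}$ strictly between $r$ and $r'$, which would itself require $\ge f+1$ preendorsements by Lemma~\ref{l:oneMaj}'s counting remark, at least one of them by a correct baker, and if that correct baker is in $L^{u,h}_{\ell,r}$ we recurse. This forces the induction.

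Formally, I would do induction on $r'$. For $r' = r$: a correct baker locked on $(u,h)$ at the end of round~$r$ cannot have preendorsed $(u',h')\neq(u,h)$ during round~$r$ itself, because the condition $\lockedRound_q < \var{eR} < r_q = r$ combined with $\lockedBlock_q = u \neq u'$ is unsatisfiable once $q$ is locked (and before $q$ locks in round~$r$, by Lemma~\ref{l:bcOnce} it preendorses at most once per round, and if it preendorsed $u'\neq u$ it could not then have received a preendorsement QC for $u$ to lock on, by Lemma~\ref{l:oneMaj}). Hence the $\ge f+1$ correct bakers in $L^{u,h}_{\ell,r}$ do not preendorse $(u',h')$ at round~$r$, leaving at most $f$ Byzantine preendorsers, so $|\pendos(\ell,r,u',h')| \le f$. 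For the inductive step $r' > r$, assume $|\pendos(\ell,r'',u'',h'')| \le f$ for all $r \le r'' < r'$ and all $(u'',h'')\neq(u,h)$; in particular there is no preendorsement QC for any such $(u'',h'')$ at rounds in $[r, r')$. Now a baker $q \in L^{u,h}_{\ell,r}$ preendorsing $(u',h')\neq(u,h)$ at round~$r'$ needs (by the condition) either $\lockedBlock_q = u'$, impossible since $\lockedBlock_q = u$ and the lock only moves to values with a preendorsement QC at rounds $> r$, which by the induction hypothesis must be $(u,h)$ itself — contradiction with $u' \neq u$; or $\lockedRound_q < \var{eR} < r'$ with a preendorsement QC for $(\ell,\var{eR},u',h_q)$, so $\var{eR} > \lockedRound_q = r$ (if $q$ relocked after $r$ it relocked on $(u,h)$ by the induction hypothesis, keeping $\lockedBlock_q = u$) — wait, $\var{eR} > r$ and $\var{eR} < r'$ means $\var{eR} \in [r+1, r'-1] \subseteq [r, r')$, so by the induction hypothesis there is no preendorsement QC for $(u',h_q)\neq(u,h)$ at round~$\var{eR}$, contradiction. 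Either way $q$ does not preendorse $(u',h')$ at round~$r'$, so again only Byzantine bakers can, and $|\pendos(\ell,r',u',h')| \le f$.

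The main obstacle will be handling the hash component~$h'$ cleanly: when $h' \neq h$, a baker locked on $(u,h)$ processes a proposal with hash $h'$ only if $h' = h_q$, but a correct baker's $h_q$ equals the hash of its blockchain head at level~$\ell-1$, which by agreement (and the \betterChain mechanism, which only adopts heads with equal content) is consistent with the committed value — so effectively $h' = h$ for any proposal a correct locked baker would even look at, and the $h' \neq h$ case reduces to observing that correct bakers in $L^{u,h}_{\ell,r}$ never see (and hence never preendorse against) a proposal with a different predecessor hash. I would state this carefully, possibly as a preliminary remark, since it is where the blockchain structure (as opposed to plain single-shot consensus) enters and where a sloppy argument could hide a gap. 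The rest is the standard Tendermint-style locking argument (cf.~\cite{dissecting}) adapted to the preendorsement-QC bookkeeping.
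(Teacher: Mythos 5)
Your overall strategy is the same as the paper's: the paper argues by taking the smallest round $r'\geq r$ at which a competing tuple gathers $f+1$ correct preendorsements and derives a contradiction, which is exactly your strong induction on $r'$ in contrapositive form; the case split on the two disjuncts of line~\ref{propCheck1}, the monotonicity of $\lockedRound$, and the fact that a proposal with $\var{eR}>0$ must carry a preendorsement QC for round $\var{eR}$ are all the same ingredients. Your inductive step for $r'>r$ is sound (modulo the hash bookkeeping, which you rightly flag and which the paper glosses over too).

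The genuine problem is your base case $r'=r$. The claim that no correct baker in $L^{u,h}_{\ell,r}$ can have preendorsed $(u',h')\neq(u,h)$ during round~$r$ is false: a baker $q$ that enters round~$r$ unlocked may receive two equivocating proposals from a Byzantine proposer, preendorse $(u',h)$, and nevertheless receive a preendorsement QC for $(u,h)$ (formed by the other bakers) during its \EndorseP phase and lock on $(u,h)$ — so $q\in L^{u,h}_{\ell,r}$ yet $q$ preendorsed a competing tuple at round~$r$. Your parenthetical justification, ``if it preendorsed $u'$ it could not then have received a preendorsement QC for $u$, by Lemma~\ref{l:oneMaj},'' is not a valid inference: Lemma~\ref{l:oneMaj} compares two full QCs, and one baker abstaining from $u$ does not prevent $2f+1$ others from preendorsing it. The correct repair is global, not per-process: since every member of $L^{u,h}_{\ell,r}$ locked at round~$r$, a preendorsement QC for $(u,h)$ at round~$r$ exists, hence at least $f+1$ of the $2f+1$ correct bakers preendorsed $(u,h)$ at round~$r$; by Lemma~\ref{l:bcOnce} they are disjoint from the correct preendorsers of $(u',h')$ at round~$r$, so at most $f$ correct bakers preendorse $(u',h')$, i.e.\ $|\pendos(\ell,r,u',h')|\leq f$. (The paper avoids needing the universal per-process claim altogether by intersecting $L^{u,h}_{\ell,r}$ with a hypothetical set of $f+1$ correct preendorsers and reasoning about a single witness.) Relatedly, watch the bookkeeping at the end of each case: $\pendos$ counts only \emph{correct} preendorsements, so the conclusion ``only Byzantine bakers can preendorse $(u',h')$, hence $|\pendos|\leq f$'' is the wrong count — what you actually get is that at most $2f+1-(f+1)=f$ \emph{correct} bakers remain available to preendorse the competing tuple, which happens to give the same bound.
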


\begin{proof}
  We reason by contradiction.
  Suppose that $|L^{u,h}_{\ell,r}| \ge f+1$, and let $r'\geq r$ be the
  smallest round for which there exists
  a tuple~$(u',h')\neq (u,h)$ such that
  $|\pendos(\ell,r',u',h')|\geq f+1$.
  %
  %
  As $|L^{u,h}_{\ell,r}| \ge f+1$ and $|\pendos(\ell,r',u',h')|\geq
  f+1$, there is at least one correct process $p$ such that $p\in
  L^{u,h}_{\ell,r}$ and $p$ preendorsed $(u',h')$ at round~$r'$.
  As $p\in L^{u,h}_{\ell,r}$, we have that $p$ is locked on~$(u,h)$ at round~$r$.
  Since $p$ preendorsed (line \ref{propCheck1End}) at round~$r'$, it means that
  one of the two disjunctions at line~\ref{propCheckLock} holds.
  Note that the value of $r_p$ at line~\ref{propCheckLock} is $r'$
  in this case.

  Suppose the first disjunction holds, i.e.,
  $\lockedBlock_p=u'$. As a process can re-lock only in the
  phase~\EndorseP, under the condition at line~\ref{Endorse}, this
  means that there is a round $r''$ with $r\leq r''<r'$ and at which
  $|\pset()|\geq 2f+1$. Therefore $|\pendos(\ell,r'',u',h')|\geq
  f+1$. This contradicts the minimality of~$r'$.

  Suppose now that the second disjunction holds, that is,
  $\lockedRound_p < r'' < r'$
  where the round~$r''$ is the endorsable round of the proposer of~$u'$.
  We note that a process cannot unlock (i.e.~unset
  $\lockedRound$), but only re-lock (i.e.~set $\lockedRound$ to a
  different value). Therefore $\lockedRound_p\geq r$ at round $r'$ and
  from this, we obtain that $r'' > r > 0$. From the validity
  requirements of a propose message, we obtain that it contains a
  preendorsement QC for~$(u',h')$. Thus we have that
  $|\pendos(\ell,r'',u',h')|\geq f+1$.
  This contradicts the minimality of $r'$, since $r''<r'$.
  %
  \qed
\end{proof}




\begin{lemma}\label{l:agreement}
  No two correct processes have two different committed blocks at the
  same level in their blockchain.
\end{lemma}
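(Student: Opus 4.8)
The plan is to extract, from each committed block, an endorsement QC for its content at level~$\ell$, and then let the locking invariant packaged in Lemma~\ref{l:preendorseBound} force any two such QCs to certify the same value.

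First I would fix a correct process~$p$ with a committed block~$b$ at some level~$\ell$, and write $(u,h)$ for the pair consisting of the content of~$b$ and the hash of its predecessor (this is exactly the ``decided value'' of the consensus instance at level~$\ell$). Since $b$ is not the head of $\blockchain_p$, the block at level~$\ell+1$ also belongs to $\blockchain_p$; that block was placed there either by a decision (so it passed $\validMessage$) or by adopting a longer chain (so the chain passed $\validChain$), and in both cases its header carries an endorsement QC~$Q$ for $(u,h)$ at level~$\ell$ and some round~$r$. From~$Q$ I would single out $f+1$ correct bakers of level~$\ell$ that endorsed $(u,h)$ at round~$r$: each of them did so only via lines~\ref{Endorse}--\ref{EndorseEnd}, hence only after setting $\lockedBlock$ to~$u$ and $\lockedRound$ to~$r$ (lines~\ref{EndorseLock}--\ref{EndorseLockRound}); since a correct baker changes its lock only in the \EndorseP phase and never unlocks, these $f+1$ bakers are still locked on $(u,h)$ at the end of round~$r$, i.e.\ $|L^{u,h}_{\ell,r}| \ge f+1$.

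Next I would take a second correct process~$q$ with a committed block~$b'$ at the same level~$\ell$, obtain in the same way a pair $(u',h')$ and an endorsement QC~$Q'$ for it at level~$\ell$ and some round~$r'$, with $|L^{u',h'}_{\ell,r'}| \ge f+1$. Assume without loss of generality $r \le r'$. Because $Q'$ certifies $(u',h')$ at round~$r'$ and a correct baker endorses at round~$r'$ only after observing at least $2f+1$ preendorsements for that value (line~\ref{Endorse}), there is a preendorsement QC for $(u',h')$ at level~$\ell$, round~$r'$, whence $|\pendos(\ell,r',u',h')| \ge f+1$. Applying Lemma~\ref{l:preendorseBound} to the set $L^{u,h}_{\ell,r}$ and the round $r' \ge r$ gives $|\pendos(\ell,r',u'',h'')| \le f$ for every $(u'',h'') \neq (u,h)$; comparing with the previous inequality forces $(u',h') = (u,h)$. (In the border case $r=r'$ one may instead invoke Lemma~\ref{l:oneMaj} directly.) Hence $b$ and $b'$ carry the same content and point to the same predecessor, i.e.\ they are the same committed block.

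The main obstacle is the first step, not the second: the second is essentially a one-line application of Lemma~\ref{l:preendorseBound}, whereas the first needs care in (i)~connecting ``$b$ is a committed block of a correct process'' to ``a validated endorsement QC for $b$'s content exists at level~$\ell$'', via the validity checks that $\validMessage$/$\validChain$ perform on the successor block, and (ii)~arguing that the $f+1$ correct endorsers are genuinely still locked on $(u,h)$ at the end of round~$r$ — this relies on the invariant that the event handler ($\waitMsg$, $\updateEndorsable$) never unlocks a correct baker, and, more delicately, that a $\betterChain$ head update cannot undo this before the instance at level~$\ell$ terminates (which holds because $\betterChain$ revises only the head and only upon receiving a chain of the current length, and by the time $b$ is committed that instance is over). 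Once these points are settled, the conclusion is immediate.
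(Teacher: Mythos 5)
There is a real gap, and it sits exactly in the step you flagged as easy: the closing ``hence $b$ and $b'$ carry the same content and point to the same predecessor, i.e.\ they are the same committed block.'' Your quorum-and-locking argument is run at level~$\ell$, so what it actually establishes is agreement on the \emph{decided value} $(u,h)$ of the instance at level~$\ell$ --- the content of~$b$ and the hash of its predecessor. But a block is more than that pair: its header also carries the round at which it was produced, the endorsement QC for the previous level, and the endorsable round/preendorsement QC. The paper explicitly warns that two correct processes may decide at level~$\ell$ on \emph{distinct proposed blocks sharing the same content} (one decides on the round-$r$ proposal, another on a round-$r'$ re-proposal of the same endorsable value), which is precisely why it says single-shot agreement covers the whole chain ``except for its head.'' So $(u,h)=(u',h')$ does not give $b=b'$, and the lemma as stated is about blocks, not contents.

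The paper closes this gap by running the very same machinery one level higher: since $b$ is committed, $p$ holds a block at level~$\ell+1$, and the value agreed at level~$\ell+1$ contains $h=\hash(b)$ --- the hash of the \emph{entire} block~$b$, header included. Extracting the endorsement QCs at level~$\ell+1$, counting $f+1$ correct locked endorsers, and applying Lemma~\ref{l:preendorseBound} (with Lemma~\ref{l:oneMaj} for the equal-round case) forces the two level-$(\ell+1)$ values to coincide, hence $\hash(b)=\hash(b')$, hence $b=b'$ by collision-freeness. Everything else in your write-up --- extracting a QC from the successor block via $\validMessage$/$\validChain$, the ``endorse implies lock'' step giving $|L^{u,h}_{\ell,r}|\geq f+1$, the use of Lemma~\ref{l:preendorseBound} --- is the right toolkit and matches the paper; you only need to aim it at the consensus instance at level~$\ell+1$ rather than at level~$\ell$. (Your level-$\ell$ argument is still the right one for the \emph{agreement} theorem on output values, but not for this lemma.)
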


\begin{proof}
  We reason by contradiction.
  Let $\ell$ be some level.
  Assume that two different correct processes~$p$, $p'$ have
  respectively two different committed blocks~$b,b'$ at level~$\ell$ in their
  blockchain, with $b\neq b'$.
  %

  By the definition of committed blocks (Section~\ref{sec:drc}), as~$b$
  is a committed block at~$\ell$, the level of the head of
  $p$'s blockchain is at least $\ell+1$.
  Then, as $p$ has a block at level~$\ell+1$ in
  his blockchain, $p$ has observed an endorsement QC for
  $(\ell+1,r,h,u)$ for some value~$u$ and some round~$r$, where $h$ is
  the hash of block~$b$.  Similarly, $p'$ has observed an endorsement~QC
  for $(\ell+1,r',h',u')$ for some value~$u'$ and some round~$r'$,
  where $h'$ is the hash of block~$b'$. As $b\neq b'$, we have that
  $h\neq h'$, therefore $(u,h)\neq(u',h')$.
  We assume without loss of generality that $r\leq r'$.

  Since there are at most $f$ Byzantine processes, and by
  Lemma~\ref{l:bcOnce} correct bakers can only endorse once per
  round, it follows that at least $f+1$ correct bakers
  endorsed~$(u,h)$ during round~$r$ at level~$\ell$.
  Before broadcasting an endorsement for~$(u,h)$ at round~$r$
  (line~\ref{EndorseEnd}) any correct process sets its $\lockedBlock$
  to $u$ and its $\lockedRound$ to $r$ (line~\ref{EndorseLock}),
  thus~$|L^{u,h}_{\ell,r}| \geq f+1$.
  %
  %
  By Lemma~\ref{l:preendorseBound}, since $|L^{u,h}_{\ell,r}| \ge f+1$, we
  also have $|\pendos(\ell,r'',u'',h'')|\leq f$, for
  any round $r''\geq r$, and any value $u''$ with $(u'',h'')\neq (u,h)$.
  This means that a correct process cannot endorse some $(u'',h'')\neq
  (u,h)$ at a round~$r''\geq r$. This in turn means that there cannot
  be $2f+1$ endorsements for~$(u'',h'')\neq (u,h)$ with round~$r''\geq r$.
  %
  %
  This contradicts the fact that there is a QC for $(\ell+1,r',u',h')$.
  \qed
\end{proof}

\begin{theorem}\label{thm:agreement}
  \ourproto satisfies agreement.
\end{theorem}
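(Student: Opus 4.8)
Prove Theorem~\ref{thm:agreement}: \ourproto satisfies agreement, i.e.\ at any time the sequences of output values of any two correct processes are prefix-comparable.

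The plan is to lift Lemma~\ref{l:agreement}, which already rules out two \emph{different committed} blocks at the same level, to a statement about entire output-value sequences. First I would recall the bookkeeping: for a correct process~$p$, the output value at level~$\ell$ is the tuple $(u,h)$ extracted from the block at level~$\ell$ of $\blockchain_p$, and by the way $\blockchain_p$ is maintained (line~\ref{line:updateState-first} and the \fn{updateState} calls in $\fn{runDRC}$, via either a decision in \fn{advance} or a longer valid chain in $\waitMsg$), the blockchain only ever grows and the chain is always valid, so consecutive blocks are hash-linked. Agreement on output sequences is therefore equivalent to: for every level~$\ell$ that appears in both $\blockchain_p$ and $\blockchain_q$, the two processes have the \emph{same output value} $(u,h)$ at level~$\ell$.

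The argument splits by whether the block at level~$\ell$ is committed or is the current head. Fix correct processes $p$, $q$ and a level~$\ell$ present in both blockchains. \textbf{Case 1:} $\ell$ is not the head level for either $p$ or $q$, i.e.\ both have a block at level~$\ell+1$; then the block at level~$\ell$ is committed for both, and Lemma~\ref{l:agreement} gives that $p$'s and $q$'s blocks at level~$\ell$ are equal, hence the output values coincide. \textbf{Case 2:} say $\ell$ is $p$'s head level but $q$ has a block at level~$\ell+1$ (so $\ell$ is committed for $q$). Here $p$ and $q$ may hold different \emph{blocks} at level~$\ell$ (differing headers), but I claim they hold the same \emph{output value} $(u,h)$. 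Indeed, since $q$ has a block at level~$\ell+1$ whose header carries an endorsement QC for level~$\ell$, by Lemma~\ref{l:oneMaj} there is at most one value with an endorsement QC per round, and the locking/\-\,Lemma~\ref{l:preendorseBound} argument already used in Lemma~\ref{l:agreement} shows that the content $u$ and predecessor hash $h$ agreed upon at level~$\ell$ are unique across all rounds; meanwhile $p$'s head block at level~$\ell$ was either decided by $p$ (so $p$ observed an endorsement QC for the same unique $(u,h)$) or adopted via a valid chain / $\betterChain$ update (which only changes the header, never the content, and keeps $h_p$ consistent — the $\consistent$ check in $\validChain$ forces the hash field to match $\fn{hash}$ of the level-$(\ell-1)$ block, which is committed and hence shared). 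Either way $p$'s output value at~$\ell$ is the unique $(u,h)$, so it equals $q$'s. \textbf{Case 3:} $\ell$ is the head level for both $p$ and $q$. Again they may differ on the header, but the content $u$ at the head comes from the single-shot instance at level~$\ell$: $p$'s head was either decided (endorsement QC for $(u,h)$) or adopted, and in the adopted case $\validChain$ guarantees a matching certificate; by Lemma~\ref{l:oneMaj} plus the uniqueness argument of Lemma~\ref{l:agreement}, any endorsement QC observed at level~$\ell$ is for the same $(u,h)$, so $p$ and $q$ have the same output value. Collecting the three cases, the two output sequences agree on their common prefix, so one is a prefix of the other.

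The main obstacle is Case 2/Case 3 — the ``disagreement only on the head'' subtlety flagged in Section~\ref{sec:drc}. Lemma~\ref{l:agreement} is stated for committed blocks and literally does not cover heads, so I need to isolate exactly which part of $(u,h)$ is pinned down even at the head. The cleanest route is to first establish an auxiliary fact (essentially the core of Lemma~\ref{l:agreement}'s proof, re-stated for the pair $(u,h)$ rather than for whole blocks): \emph{if any correct process observes an endorsement QC for $(\ell,r,u,h)$ and any correct process observes one for $(\ell,r',u',h')$, then $(u,h)=(u',h')$} — this follows verbatim from the locking argument via Lemma~\ref{l:preendorseBound} without needing a level-$(\ell+1)$ block. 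Then every correct process's output value at level~$\ell$ (whether head or committed, whether decided or adopted) is backed by such a QC, so they all coincide, and the prefix property is immediate. I would insert this auxiliary lemma right before the theorem to keep the final proof short; its proof is the already-written reasoning of Lemma~\ref{l:agreement} with the last paragraph (the one invoking the level-$\ell+1$ block) removed.
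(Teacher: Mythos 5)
Your proof is correct and rests on the same pillars as the paper's --- Lemma~\ref{l:agreement} and, behind it, the locking argument of Lemma~\ref{l:preendorseBound} --- but your decomposition is genuinely more careful and covers a case the paper's own three-line proof glosses over. The paper argues by contradiction: a level~$\ell$ where the output sequences differ yields ``two different committed blocks at level~$\ell$,'' contradicting Lemma~\ref{l:agreement}. That step is immediate only when $\ell$ lies strictly below both heads; if $\ell$ is the head level of one or both processes, the level-$\ell$ blocks are not committed and Lemma~\ref{l:agreement} does not literally apply (the blocks may legitimately differ in header fields such as the round). Your Cases~2 and~3 handle exactly this situation via the auxiliary fact that any two endorsement QCs at level~$\ell$ observed by correct processes, possibly for different rounds, certify the same tuple $(u,h)$; as you note, this is precisely the core of the proof of Lemma~\ref{l:agreement} with the detour through the level-$(\ell+1)$ block removed, and it applies because every output value a correct process holds --- whether decided or adopted --- is backed by a QC that the process itself has verified (through $\getDecision$, or through $\validChain$ and $\fn{getCertificate}$ on a received chain). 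The paper instead leans on the informal remark in Section~\ref{sec:drcbco} that heads can disagree only on header fields other than the content and the predecessor hash; your auxiliary lemma is the formal counterpart of that remark. The only point worth spelling out fully is the ``backed by a QC'' claim for adopted heads, i.e., that $\validChain$ always checks the certificate accompanying the head; with that made explicit, the argument is complete, and arguably tighter than the published one.
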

\begin{proof}
  We reason by contradiction. Suppose there are two correct processes
  $p$ and $p'$ with outputs $\bar{v}$ and $\bar{v}'$ such that neither
  one is the prefix of the other.
  This means that there is a level~$\ell$ such that
  $\bar{v}[\ell]\neq \bar{v}'[\ell]$.  We have that processes $p$ and
  $p'$ have two different committed blocks at level~$\ell$, which
  contradicts Lemma~\ref{l:agreement}.
  \qed
\end{proof}

\subsection{Progress}

Let $\Phases$ be the set of labels \ProposeP, \PreendorseP, and
\EndorseP.
Let $S_p:\pNat\times\pNat\times\Phases\to\mathbb{R}$ be the function
such that $S_p(\ell,r,\var{phase})$ gives the starting time of the
phase~$\var{phase}$ of round~$r$ of process~$p$ at level~$\ell$.
We consider that the function $S_p$ returns the real time, not the local time of
process~$p$. Note that for different processes $p$ and $q$, the function $S_p$
and $S_q$ may return different times for the same input, because $p$ and $q$
determine the starting time of their phases based on their local clocks, which
may be different before~$\tau$.

Contrary to Section~\ref{sec:rc}, we consider the general case when $\rho \geq
0$. We say that two correct processes $p$ and $p'$ are \emph{synchronized} if
$\ell_p=\ell_{p'}$, $|r_p - r_{p'}| \leq 1$, and $|S_p(\ell_q,r_q,\var{phase}_q)
- S_{p'}(\ell_q,r_q,\var{phase}_q)| \leq 2\rho$, where $q\in\set{p,p'}$ is the
process which is ``ahead''.
We say that $q$ is \emph{ahead} of $q'$ (or that $q'$ is \emph{behind}
$q$) if $S_{q}(\ell_q,r_q,\var{phase}_{q}) \leq
S_{q'}(\ell_q,r_q,\var{phase}_q)$.
%
%
Intuitively, two processes are synchronized if they are roughly at the
same level, round, and phase, where by ``roughly'' we understand that
the process that is behind starts its current phase at most $2\rho$
time after the process that is ahead starts the same phase (at the
same level and round).
We say that $p$ and $q$ are \emph{synchronized at level~$\ell$ and round~$r$} if
$p$ and $q$ are synchronized and $\ell = \ell_p = \ell_q$ and $r = \max(r_p,
r_q)$. Note that at the beginning of the round~$r$ of one of the processes, the
other process might be at round~$r-1$. However, for at least $\Delta'(r)-2\rho$
time, the two processes are at the same round~$r$.
%

Next, we provide a simpler characterization of process synchronization.
Let $t$ be the last time $p$ called $\getNextRoundAndPhase$.  We denote by
$\levelOffset_p = \var{now} - \var{levelStart}$, where $\var{now}$ is the value
returned by~$\fn{now}$ when called by~$p$ at~$t$, and $\var{levelStart}$ is the
sum at line~\ref{line:tsum}. Note that every call to $\getNextRoundAndPhase$ is
preceded by a call to \fn{synchronize}, which in turn calls~\fn{now}.
The next lemma states that we can use level offsets to characterize process
synchronization. We omit its proof, which follows from an analysis of the
\fn{synchronize} and $\getNextRoundAndPhase$ functions.

\begin{lemma}\label{l:condsync}
  After $\tau$, two correct processes $p$ and $q$ are synchronized iff
  $|\levelOffset_p-\levelOffset_q|\leq 2\rho$.
\end{lemma}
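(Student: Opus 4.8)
The plan is to funnel both notions---being ``synchronized'' and having $|\levelOffset_p-\levelOffset_q|\le 2\rho$---into a single condition on the blockchain-derived quantity $\var{levelStart}$ (the sum at line~\ref{line:tsum}). First I would read off from $\fn{synchronize}$ and $\getNextRoundAndPhase$ that, for a process at a fixed level $\ell$, the map from a level offset to the triple $(\var{round},\var{phase},\var{phaseOffset})$ is a fixed, process-independent piecewise function: the while loop of $\fn{synchronize}$ cuts $[\,0,\sum_{j\le r}\roundDur(j)\,)$ into rounds of length $\roundDur(j)$, and $\getNextRoundAndPhase$ cuts round $r$ into $m$ phases of length $\phaseDur(r)$, using $\roundDur(r)=m\,\phaseDur(r)$. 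Tracing the $\var{runEventHandler}$ timer (set to $\phaseDur(r)-\var{phaseOffset}$, and then to $\phaseDur(r)$ for each subsequent phase, as per Section~\ref{sec:gssc}) one gets: a process $p$ enters phase $\var{phase}$ of round $r$ at level $\ell$ exactly when its local clock reads $\var{levelStart}_p+\sigma(r,\var{phase})$, where $\sigma(r,\var{phase})=\sum_{j=1}^{r-1}\roundDur(j)+k\cdot\phaseDur(r)$ and $k$ is the index of $\var{phase}$ in $\var{phases}$. The point is that $\sigma$ depends only on $(r,\var{phase})$, whether or not $p$ resynchronized mid-round.

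Next I would use the clock bound. After $\tau$, $|t-\fn{now}()|\le\rho$ at every real time $t$, so $S_p(\ell,r,\var{phase})$ differs from $\var{levelStart}_p+\sigma(r,\var{phase})$ by at most $\rho$, and the same for $q$; subtracting, $|S_p(\ell,r,\var{phase})-S_q(\ell,r,\var{phase})|$ lies within $2\rho$ of $|\var{levelStart}_p-\var{levelStart}_q|$, for every $(r,\var{phase})$. Applying the clock bound to the definition $\levelOffset_p=\fn{now}-\var{levelStart}_p$---reading the two offsets at one common real instant---gives likewise that $|\levelOffset_p-\levelOffset_q|$ lies within $2\rho$ of $|\var{levelStart}_p-\var{levelStart}_q|$. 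So, modulo the $2\rho$ clock-skew budget, the timing clause of ``synchronized'' and the clause $|\levelOffset_p-\levelOffset_q|\le 2\rho$ both assert the same thing: $\var{levelStart}_p=\var{levelStart}_q$.

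To finish I would establish a ``quantization'' property: for correct $p,q$, the quantity $\var{levelStart}_p-\var{levelStart}_q$ is either $0$ or has absolute value at least $\roundDur(1)=m\,\phaseDur(1)$, which exceeds $2\rho$ by the model assumption $\phaseDur(1)>2\rho$ (indeed exceeds $4\rho$ for $m\ge 2$), since round durations are increasing. If $\ell_p\ne\ell_q$ this is immediate from the shape of the sum at line~\ref{line:tsum}. If $\ell_p=\ell_q=\ell$, then by Lemma~\ref{l:agreement} the two processes agree on every block at levels $0,\dots,\ell-2$ (these are committed for both), hence on the rounds $\decisionRound(0),\dots,\decisionRound(\ell-2)$ feeding the sum; they can therefore differ only on the head's round $\decisionRound(\ell-1)$, and any difference there perturbs the sum by at least $\roundDur(1)$. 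With quantization the equivalence drops out: if $|\levelOffset_p-\levelOffset_q|\le 2\rho$ then $|\var{levelStart}_p-\var{levelStart}_q|\le 4\rho<\roundDur(1)$, forcing $\ell_p=\ell_q$ and $\var{levelStart}_p=\var{levelStart}_q$; conversely, those two facts together with the clock bound give $|\levelOffset_p-\levelOffset_q|\le 2\rho$ and also the timing clause $|S_p-S_q|\le 2\rho$ at the ahead process's position, while $|r_p-r_q|\le 1$ is automatic because the two clocks differ by at most $2\rho<\roundDur(1)$.

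I expect the main obstacle to be the bookkeeping across the three time scales in play---real time, each process's local clock, and the blockchain-anchored target $\var{levelStart}$---and in particular pinning down at which instant the two level offsets are read: the equivalence is clean exactly when they are read as $\fn{now}()-\var{levelStart}$ at the same real time, so one must check that the value a process carries between two successive calls to $\getNextRoundAndPhase$ is consistent with this, which is where the sequential, deterministic phase schedule of Section~\ref{sec:gssc} and the no-skipping hypothesis $\phaseDur(1)>2\rho$ are used. The secondary delicate point is the appeal to Lemma~\ref{l:agreement} in the quantization step: one must verify that every round feeding the sum at line~\ref{line:tsum}, other than that of the head, indeed belongs to a committed---hence agreed-upon---block.
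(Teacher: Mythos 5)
The paper gives no proof of this lemma to compare against (it states that the proof ``follows from an analysis of the $\fn{synchronize}$ and $\getNextRoundAndPhase$ functions'' and omits it), so I can only judge your reconstruction on its own terms. Its core is right, and it uses exactly the ingredients the authors deploy in the adjacent proof of Lemma~\ref{l:same_blockchain}: the deterministic local-clock schedule $\var{levelStart}+\sigma(r,\var{phase})$, the $\pm\rho$ clock bound yielding a $2\rho$ budget on each side of the equivalence, and Lemma~\ref{l:agreement} to force equality of the rounds of committed blocks. Restricted to the case $\ell_p=\ell_q$, your argument is sound.

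The genuine gap is the branch $\ell_p\neq\ell_q$ of your quantization claim, which you dismiss as ``immediate from the shape of the sum.'' It is not, and in fact it fails: the two sums at line~\ref{line:tsum} do not share a common suffix, because the level at which $q$'s head sits is already \emph{committed} in $p$'s chain, possibly with a different round. Take $q$ at level $\ell$ with a head of round $2$ at level $\ell-1$, and $p$ at level $\ell+1$ with a committed round-$1$ block at level $\ell-1$ and a round-$1$ head at level $\ell$ --- a reachable configuration, since reconciling heads with different rounds is precisely what $\betterChain$ exists for. Then $\var{levelStart}_p-\var{levelStart}_q=\roundDur(1)-\roundDur(2)$, whose magnitude is only an increment of $\roundDur$, which the model merely requires to be positive; it can be far below $\roundDur(1)$ and below $2\rho$. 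Hence $|\levelOffset_p-\levelOffset_q|\leq 2\rho$ does not by itself force $\ell_p=\ell_q$, and the direction from the offset bound to ``synchronized'' is not established by your argument (nor, arguably, is the lemma literally true without an implicit same-level hypothesis --- which is how it is used in Lemma~\ref{l:same_blockchain}, where $\ell_p=\ell_q$ and equal head rounds are assumed outright). You should either add that hypothesis explicitly or find a genuinely different argument for the cross-level case. Two smaller caveats you already half-acknowledge: your schedule $\sigma$ relies on the timer of line~\ref{line:setT} being re-armed to $\phaseDur(r)$ at every phase boundary, which the pseudocode never shows but is clearly intended; and your bound $4\rho<\roundDur(1)$ needs $m\geq 2$, which holds for \ourproto ($m=3$) but not for the generic skeleton.
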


\begin{lemma}\label{l:same_blockchain}
  Let $p$ and $q$ be two correct processes. If, after $\tau$, they remain at the
  same level and
  the head of their blockchain has the same
  round, then they are eventually synchronized.
\end{lemma}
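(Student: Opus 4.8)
The plan is to reduce synchronization to a statement about level offsets, via Lemma~\ref{l:condsync}, and then to show that once $p$ and $q$ agree on the level and on the round of the head of their blockchain, their level offsets differ by at most $2\rho$ after a bounded delay. First I would observe that by hypothesis $\ell_p = \ell_q = \ell$ eventually and holds forever thereafter, and that the rounds recorded in the headers of the blocks at levels $0,\dots,\ell-1$ are identical for $p$ and $q$; this is exactly the data that enters the sum $\var{levelStart}$ on line~\ref{line:tsum} of \fn{synchronize}. Hence $p$ and $q$ compute the \emph{same} value of $\var{levelStart}$ for level~$\ell$.

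Next I would analyze the quantity $\levelOffset = \var{now} - \var{levelStart}$. Since $\var{levelStart}$ agrees for the two processes, $|\levelOffset_p - \levelOffset_q|$ equals the difference between the values returned by $\fn{now}$ when $p$ and $q$ last called \fn{synchronize}. After $\tau$, each such value is within $\rho$ of the real time at which the call was made (loose clock synchronization), so if the two processes called \fn{synchronize} at real times within $d$ of each other, then $|\levelOffset_p - \levelOffset_q| \le d + 2\rho$. The remaining task is therefore to bound how far apart in real time the two processes perform their respective \fn{synchronize} calls once they are both stably at level~$\ell$. Here I would use the fact that, per the remark at the end of Section~\ref{sec:gssc} and Remark~\ref{r:sync}, every process calls \fn{advance} at the end of each round, which re-enters $\runInstance$ and hence \fn{synchronize}; and the round durations $\Delta(r)$ grow unboundedly. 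So within one round of the smaller of the two processes' current rounds, each of $p$ and $q$ will have made a fresh \fn{synchronize} call, and combined with the first bound this forces $|\levelOffset_p - \levelOffset_q|$ to drop to $2\rho$ (the clock-skew slack is unavoidable but also sufficient). By Lemma~\ref{l:condsync}, $p$ and $q$ are then synchronized.

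The subtle point — and the main obstacle — is the directionality and the ``eventually'': one has to argue that after the processes first become synchronized they \emph{stay} synchronized at the same level, or at least that the claimed synchronization is not immediately destroyed. Since the hypothesis is that they remain at the same level with heads of equal round, $\var{levelStart}$ keeps agreeing; the only source of renewed divergence is the round-increment path ($\fn{advance}$ with \textsf{None}) combined with clock drift, but because both processes pace their rounds purely on local time from the common reference point, and round durations are at least $2\rho$ (phase durations exceed $2\rho$), the invariant $|\levelOffset_p - \levelOffset_q|\le 2\rho$ is preserved across round boundaries. I expect the bookkeeping to reconcile the ``process that is ahead'' convention in the definition of synchronized with the symmetric bound $|\levelOffset_p - \levelOffset_q| \le 2\rho$ to be the fiddliest part, but it is routine given Lemma~\ref{l:condsync}.
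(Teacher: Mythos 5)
Your overall route is the paper's: show that $\var{levelStart}$ (the sum at line~\ref{line:tsum}) coincides for $p$ and $q$, and then reduce to Lemma~\ref{l:condsync} by bounding $|\levelOffset_p-\levelOffset_q|$ by $2\rho$ via the clock-skew bound. A minor remark first: the agreement of the rounds of the blocks at levels $0,\dots,\ell-2$, which you simply assert, is something the paper has to derive --- it follows from the agreement property because the output value at level $\ell-1$ contains the hash of the block at level $\ell-2$, which pins down that block's entire header (including its round), and so on recursively; the round of the head itself is \emph{not} covered by agreement, which is precisely why it appears as a hypothesis of the lemma.

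The genuine gap is in the central estimate. You compare $\levelOffset_p$ and $\levelOffset_q$ as computed at the two distinct real times $t$ and $t'$ of the respective $\fn{synchronize}$ calls, obtaining $|\levelOffset_p-\levelOffset_q|\le d+2\rho$ with $d=|t'-t|$, and you then try to make $d$ negligible by arguing that both processes issue a fresh $\fn{synchronize}$ call within one round. This does not close the gap: two calls, one per round, can still be offset by up to a full round duration, and $\roundDur(r)$ grows without bound, so $d+2\rho$ never drops to $2\rho$; the inequality you need would require $d=0$. The paper's resolution is not to shrink $d$ but to eliminate it: the two offsets are compared at a \emph{single} common instant $t'$. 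Because a process measures elapsed time precisely, $p$'s level offset at time $t'$ equals $\levelOffset_p+(t'-t)$, and each of the two offsets is within $\rho$ of the ``true'' offset $t'-\var{levelStart}$ at that one instant; their difference is therefore at most $2\rho$, with no dependence on $d$ whatsoever. Without this step (or an equivalent one) your argument does not establish the hypothesis of Lemma~\ref{l:condsync}. Your closing concern about \emph{remaining} synchronized is not required by the statement (which only asks for synchronization at some time after $\tau$), and the same single-instant comparison would in any case address it.
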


\begin{proof}
  Suppose that $p$ and $q$ are both at the same level~$\ell$ and that their
  heads have the same round.
  Furthermore, $p$ and $q$ have already decided at $\ell-1$. From the agreement
  property, $p$ and $q$ agree on the output value at level~$\ell-1$, which means
  that they agree on all blocks up to (and including) level~$\ell-2$, and
  therefore on their rounds as well.
  Thus, the block rounds in $p$'s and $q$'s blockchain are respectively the
  same.

  Next, we observe that both $p$ and $q$ eventually call $\fn{synchronize}$ and
  $\getNextRoundAndPhase$. Indeed, at the end of a round a correct process calls
  \fn{advance}, which in turn calls $\runInstance$ and finally
  \fn{synchronize}. A round eventually terminates, because it has a fixed
  duration. Also, the round returned by \fn{synchronize} will eventually be
  larger than the current round of the process, so the process will eventually
  exit the recursion at line~\ref{line:runCIreccall} and also call
  $\getNextRoundAndPhase$.

  Let $p$ be the first to call $\getNextRoundAndPhase$ and let $t$ be the time
  of the call. Let $t' \geq t$ be the time when $q$ first calls
  $\getNextRoundAndPhase$.

  We first note that $\var{levelStart}$ in the definition of $\levelOffset$ is
  the same for both $p$ and~$q$, at both times $t$ and~$t'$.
  Let $\levelOffset^*_t=t-\var{levelStart}$ and
  $\levelOffset^*_{t'}=t'-\var{levelStart}$. Intuitively, these are the correct
  level offsets at $t$ and $t'$ of any correct process if its local clocks were
  precise.
  We consider the values of the variable $\levelOffset_p$ at~$t$ and~$t'$ and
  denote these by (simply) $\levelOffset_p$ and $\levelOffset'_p$,
  respectively. We note that $\levelOffset'_p - \levelOffset_p = t'-t$, because
  we assume that a process measures intervals of time precisely.
  Given the bound on clock skews, we have that
  $|\levelOffset_p-\levelOffset^*_t|\leq \rho$ and
  $|\levelOffset_q-\levelOffset^*_{t'}|\leq \rho$. Summing up, by using the
  inequality $|a-b|\leq|a|+|b|$, we obtain that $|\levelOffset_q-\levelOffset_p
  - (t'-t)|\leq 2\rho$, that is, $|\levelOffset_q-\levelOffset'_p|\leq
  2\rho$. Then, by Lemma~\ref{l:condsync}, $p$ and $q$ are synchronized at $t'$.
  \qed
\end{proof}

\begin{lemma}\label{l:recv_within_phase}
  If $P$ is a set of correct processes that are synchronized
  after~$\tau$ at a level and a round $r$ with $\Delta'(r)>\delta+2\rho$,
  and a process $p\in P$ sends a message at the beginning of its
  current phase~$\var{ph}$, then this message is received by all
  processes in $P$ by the end of their phase~$\var{ph}$.
\end{lemma}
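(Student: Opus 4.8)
The plan is to fix an arbitrary process $q\in P$ and show that $p$'s message lies in $\msgs_q$ no later than the end of $q$'s phase~$\var{ph}$; since $q$ is arbitrary, the lemma follows.

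First I would pin down the relevant real times. Let $r$ be the round during which $p$ sends, and let $t_p=S_p(\ell,r,\var{ph})$. Since in each phase a baker performs its (conditional) broadcast before entering $\waitMsg$, and since sending is instantaneous, $p$ broadcasts the message at exactly $t_p$. As $t_p>\tau$, the validity property of $\fn{broadcast}$ guarantees that $q$ receives the message by time $t_p+\delta$. Now let $t_q=S_q(\ell,r,\var{ph})$, so that $q$'s phase~$\var{ph}$ of round~$r$ runs over $[t_q,t_q+\Delta'(r))$. Because $p$ and $q$ are synchronized at level~$\ell$, and because the start time of a fixed phase of a fixed round at a fixed level differs between two processes only by the (time-independent) difference of their level-$\ell$ start times, the synchronization predicate yields $|t_p-t_q|\le 2\rho$ (this is also the content of Lemma~\ref{l:condsync}). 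In particular $t_q\ge t_p-2\rho$, so $q$'s phase~$\var{ph}$ of round~$r$ ends at
\[
  t_q+\Delta'(r)\;\ge\;t_p-2\rho+\Delta'(r)\;>\;t_p-2\rho+(\delta+2\rho)\;=\;t_p+\delta,
\]
using the hypothesis $\Delta'(r)>\delta+2\rho$. Hence the message reaches $q$ strictly before the end of $q$'s phase~$\var{ph}$ of round~$r$; and since $\var{ph}$ is a phase of round~$r$, at the moment of delivery $q$'s current round $r_q$ is still $r-1$ or $r$ (by synchronization $|r_p-r_q|\le 1$, and $q$ has not yet finished round~$r$).

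Next I would check that $q$ actually stores the message and keeps it until it is needed. Since $\waitMsg$ is active throughout every phase and inter-phase state updates are instantaneous, $q$ runs $\handleConsensusMessage$ on delivery. At that point $q$ is at level~$\ell$ and the message's round~$r$ equals $r_q$ or $r_q+1$, so the header check on the current/next round succeeds; as the message is a well-formed consensus message broadcast by a correct committee member during round~$r$ of level~$\ell$, it passes $\validMessage$ and is added to $\msgs_q$ (line~\ref{line:msgadd}). It then remains in $\msgs_q$ at least until $q$ finishes phase~$\var{ph}$ of round~$r$, because $\filterMsgs$ (invoked by $\fn{advance}$ on a round change) discards only messages whose round differs from $q$'s current round, and $q$'s current round stays~$r$ throughout that phase. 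Note that the message may even arrive before $q$ enters phase~$\var{ph}$ (which can happen, as $\delta$ and $\rho$ are a priori unrelated); this is harmless precisely because $\handleConsensusMessage$ buffers messages of the current \emph{and} next round, so the message is retained until $q$ reaches phase~$\var{ph}$.

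The step I expect to require the most care — and that I would flag as the main obstacle — is the implicit claim that $q$ accepts $p$'s message rather than discarding it and calling $\pullChain$: the message carries $h=h_p$ in its header, and $\handleConsensusMessage$ keeps it only when $h=h_q$, i.e.\ only when $p$ and $q$ currently agree on the hash of the block at level~$\ell-1$. Agreement on the \emph{round} of the head follows from the synchronization hypothesis, but agreement on the full head (hence on its hash) does not follow from the formal synchronization predicate alone; it must be obtained either by reading ``synchronized at level~$\ell$'' as including ``the processes in $P$ share the same blockchain of length~$\ell$'', or from a companion convergence lemma stating that, after~$\tau$, correct processes that stay at the same level adopt the same head via $\betterChain$ applied to $\pullChain$ responses. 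Under that additional fact $h=h_p=h_q$, every step above goes through, and $p$'s message is in $\msgs_q$ by the end of $q$'s phase~$\var{ph}$.
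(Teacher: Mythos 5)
Your proof takes essentially the same route as the paper's: broadcast validity gives receipt by $t_p+\delta$, the synchronization predicate gives $|t_p-t_q|\le 2\rho$, and $\Delta'(r)>\delta+2\rho$ closes the gap, with the same closing observation that an early arrival (while $q$ is still at round $r-1$) is harmless because messages of the next round are buffered. The hash-agreement issue you flag is a genuine subtlety, but it lies outside the lemma as stated (which claims only reception, not storage in $\msgs_q$); the paper handles it separately in the progress proof, where all correct processes are first shown to converge on the same blockchain head via $\pullChain$ and $\betterChain$ before the synchronization lemmas are invoked.
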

\begin{proof}
  %
  Assume that $p$ sends its message at time
  $t_p=S_p(\ell,r,\var{ph})$. Consider a process~$q\in P$, and let
  $t_q=S_q(\ell,r,\var{ph})$. Process~$q$ receives the message at most at
  time $t_p+\delta$.
  By the synchronization hypothesis, we have that $t_p-t_q \leq 2\rho$.
  Then we obtain that $t_p+\delta \leq t_q+2\rho+\delta
  < t_q+\Delta'(r)$. Note that $t_q+\Delta'(r)$ is
  the time of the end of the phase~$\var{ph}$ for $q$.
  Note also that if $t_p < t_q$, $q$ might receive the message while
  it is still at round $r-1$. Even so, this message is available
  to $q$ at round $r$ because correct processes keep messages from a
  round one unit higher than their current round.
  \qed
\end{proof}

\begin{lemma}\label{l:allSynchronizedHighestLockProposes}
  Let $\ell$ be a level and $r$ a round with $\Delta'(r)>\delta+2\rho$.
  Consider that all correct bakers are synchronized at
  level~$\ell$ and round~$r$ at a time after~$\tau$.
  Let $p$ be the proposer at round~$r$.
  %
  If $p$ is correct and $\validRound_p\geq \lockedRound_q$ for any correct
  baker~$q$, then all correct bakers decide at level~$\ell$ at the
  end of round~$r$.
\end{lemma}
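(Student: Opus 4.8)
The plan is to trace the three phases of round~$r$ at level~$\ell$ in order, invoking Lemma~\ref{l:recv_within_phase} once per phase so that every message broadcast at the start of a phase reaches every correct baker before that phase ends; the assumptions that all correct bakers are synchronized at level~$\ell$ and round~$r$ after~$\tau$ and that $\Delta'(r)>\delta+2\rho$ are exactly what make that lemma usable throughout the round, and they ensure that all correct bakers execute the three phases of round~$r$ at level~$\ell$. As a preliminary, I would fix~$h$ to be the hash of the head (the block at level~$\ell-1$) common to the blockchains of all correct bakers, which follows from agreement (Theorem~\ref{thm:agreement}) together with the head-convergence performed by $\betterChain$ (Section~\ref{sec:sync}). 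Consequently, every consensus message whose header carries level~$\ell$, hash~$h$, and a round equal to the current or next round of the receiving correct baker passes the header test in $\handleConsensusMessage$.

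For the \ProposePL phase: since $p$ is the correct proposer at round~$r$, it broadcasts $\ProposeM_p(\ell,r,h,(\headQC_p,u,\validRound_p,\validPreendorsements_p))$ at the start of its \ProposePL phase, where, by the proposal code, either $\validBlock_p\neq\bot$, so $u=\validBlock_p$, $\validRound_p>0$, and $\validPreendorsements_p$ is a preendorsement QC for $(\ell,\validRound_p,u,h)$; or $\validBlock_p=\bot$, so $u$ is a fresh value, $\validRound_p=0$, and $\validPreendorsements_p$ is empty. In both cases the message is valid for every correct baker ($p$ is the legitimate proposer, $\headQC_p$ justifies level~$\ell-1$, and a fresh~$u$ is consistent by the assumption on $\getBlock$). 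I would also observe that $\validRound_p<r$ at the time of this proposal: by the synchronization bound no correct baker has reached the \PreendorsePL phase of round~$r$ when $p$ proposes, hence no correct baker has sent a round-$r$ preendorsement, hence (since $f$ Byzantine preendorsements do not form a QC and $\validRound_p$ only records rounds of preendorsement QCs of the current instance) $\validRound_p$ cannot equal~$r$. By Lemma~\ref{l:recv_within_phase}, this $\ProposeM$ is in the buffer of every correct baker by the end of its \ProposePL phase, hence available during its \PreendorsePL phase.

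For the \PreendorsePL phase: I would show every correct baker~$q$ preendorses~$u$. If $\validRound_p=0$, the hypothesis $\validRound_p\geq\lockedRound_q$ forces $\lockedRound_q=0$, so $q$ is unlocked and the preendorsement condition holds for the valid proposal~$u$. If $\validRound_p>0$, put $\var{eR}=\validRound_p$; then $0<\var{eR}<r$ and $\lockedRound_q\leq\var{eR}$. If $\lockedRound_q<\var{eR}$, then $\lockedRound_q<\var{eR}<r$, so the second disjunct at line~\ref{propCheck1} holds; if $\lockedRound_q=\var{eR}=\validRound_p$, then $q$ locked at round~$\validRound_p$ at level~$\ell$, so a preendorsement QC for $\lockedBlock_q$ exists at $(\ell,\validRound_p)$, and since $\validPreendorsements_p$ is a preendorsement QC for $(\ell,\validRound_p,u,h)$, Lemma~\ref{l:oneMaj} gives $\lockedBlock_q=u$, so the first disjunct at line~\ref{propCheck1} holds. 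In all cases~$q$ broadcasts $\PreendorseM_q(\ell,r,h,\hash(u))$ at the start of its \PreendorsePL phase. Since there are at least $2f+1$ correct bakers, Lemma~\ref{l:recv_within_phase} gives every correct baker $\geq 2f+1$ valid preendorsements for~$u$ by the end of its \PreendorsePL phase; so at line~\ref{Endorse} every correct baker has $|\pset()|\geq 2f+1$, locks on~$u$, and broadcasts $\EndorseM_q(\ell,r,h,\hash(u))$ at the start of its \EndorsePL phase. Applying Lemma~\ref{l:recv_within_phase} a last time, every correct baker collects $\geq 2f+1$ valid endorsements for~$u$ by the end of the round, so $\getDecision()$ returns $\mathsf{Some}(\proposal(),\vset())$, $\fn{advance}$ is invoked with a decision, and every correct baker decides~$u$ at level~$\ell$ at the end of round~$r$.

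I expect the main obstacle to be the \PreendorsePL step, specifically the subcase $\lockedRound_q=\validRound_p>0$, where one must argue that the value on which~$q$ is locked is exactly~$u$; this combines $\validRound_p\geq\lockedRound_q$ with the uniqueness of per-round preendorsement QCs (Lemma~\ref{l:oneMaj}) and with the fact that the proposal carries $\validPreendorsements_p$ as a preendorsement QC for $(\ell,\validRound_p,u,h)$. A secondary delicate point is justifying the preliminary claim that all correct bakers share the head hash~$h$ and the bound $\validRound_p<r$; the remaining steps are routine applications of Lemma~\ref{l:recv_within_phase} together with the message-validity bookkeeping of $\handleConsensusMessage$ and $\validMessage$.
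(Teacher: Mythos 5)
Your proof is correct and follows essentially the same route as the paper's: one application of Lemma~\ref{l:recv_within_phase} per phase, the same case analysis on $\lockedRound_q$ versus $\var{eR}$ in the \PreendorsePL phase, and the same appeal to quorum intersection (your Lemma~\ref{l:oneMaj}) to conclude $\lockedBlock_q=u$ when $\lockedRound_q=\var{eR}$. The two points you flag as delicate --- the shared head hash~$h$ and the bound $\validRound_p<r$ --- are handled slightly more explicitly by you than by the paper (which derives $\validRound_p<r$ simply from the fact that $\validRound_p$ was last set during a $\waitMsg$ of an earlier round), but the substance is the same.
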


\begin{proof}
  From Lemma~\ref{l:recv_within_phase}, we obtain that the $\ProposeM$
  message of process~$p$ is received by all correct bakers by the
  beginning of their phase~\PreendorseP.
  Let $\var{eR}$ be the value of the endorsable round field of the \ProposeM message.
  Note that $\var{eR} = \validRound_{p}$.
  We prove next that each correct baker sends the message
  $\PreendorseM(\ell,r,h,u)$, where $u, h$ are the value and the
  predecessor hash proposed by~$p$.
  Let $q$ be some correct baker.
  If $q$ is either unlocked or locked on~$u$, then the condition in
  line~\ref{propCheckLock} holds, and therefore $q$ sends its
  preendorsement for~$(u, h)$.
  Suppose now that $q$ is locked on a value different from~$u$.
  By hypothesis, we have $\lockedRound_{q} \leq \validRound_{p}$, therefore
  $\lockedRound_{q} \leq \var{eR}$.
  %
  %
  %
  Also, we have that $\validRound_p < r$, since $\validRound_p$ is
  set during the execution of $\waitMsg$ before sending the
  $\ProposeM$ message in round~$r$, therefore it is set at a previous
  round.
  %
  %
  We thus have that $\lockedRound_q \leq \var{eR} < r$. If $\lockedRound_q =
  \var{eR}$ then, by quorum intersection, $\lockedBlock_q = u$ thus the first
  disjunction in line~\ref{propCheck1} holds for~$q$. If $\lockedRound_q <
  \var{eR} < r$ then the second disjunction in line~\ref{propCheck1} holds
  for~$q$ (note that $r=r_p=r_q$). Thus $q$ sends the corresponding
  $\PreendorseM$ message.
  So, we have proved that all correct bakers broadcast the
  $\PreendorseM(\ell,r,h,u)$ message (line \ref{propCheck1End}).

  By Lemma~\ref{l:recv_within_phase} again, all these (at least
  $2f+1$) $\PreendorseM(\ell,r,h,u)$ messages are received by all
  correct bakers by the beginning of the phase~\EndorseP. If
  follows that, for all correct bakers, the condition in
  line~\ref{Endorse} is true and thus all correct bakers broadcast
  the $\EndorseM$ message for $(u, h)$ (line~\ref{EndorseEnd}). In
  the next phase (namely the phase~\ProposeP) the quorum condition (line
  \ref{line:quorumCond}) holds for $(u, h)$, for all correct
  bakers, so all of them decide $(u, h)$.
  \qed
\end{proof}

\begin{lemma}\label{l:allSynchronized}
 If at some time after~$\tau$ all correct bakers are synchronized
 at some level~$\ell$ and round~$r$ with $\Delta'(r)>\delta+2\rho$, then all
 correct bakers decide at level~$\ell$ by the end of
 round~$r+f+1$.
\end{lemma}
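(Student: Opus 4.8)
The plan is to combine Lemma~\ref{l:allSynchronizedHighestLockProposes} with a round-robin count, after first showing that within one round the correct bakers reach a ``consistent'' endorsable state. To set up: since round durations are increasing, $\Delta'(r')\ge\Delta'(r)>\delta+2\rho$ for every $r'\ge r$, so the hypotheses of Lemmas~\ref{l:recv_within_phase} and~\ref{l:allSynchronizedHighestLockProposes} are available at every round from~$r$ on. As long as no correct baker has decided at level~$\ell$, all correct bakers stay at level~$\ell$ and remain synchronized (Lemma~\ref{l:same_blockchain}, using agreement at level~$\ell-1$). Let $r_c$ be the first round in $\{r+1,\dots,r+f+1\}$ whose round-robin proposer is correct: such a round exists because these are $f+1$ distinct proposers and a committee has at most $f$ Byzantine bakers, and clearly $r_c\le r+f+1$.

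The technical core is the claim that, by the start of round $r+1$, there is a value~$v$ with $\validBlock_p=v$ and $\validRound_p\ge\lockedRound_q$ for all correct bakers $p,q$, and that this persists at every later round (with $v$ only ever replaced by the value of a strictly higher-round preendorsement QC). I would prove it by inspecting the \PreendorseP phase of round~$r$: let $q^\star$ be a correct \emph{locked} baker with maximal $\validRound_{q^\star}$, and $v^\star=\validBlock_{q^\star}$. Either $q^\star$ does not preendorse the round-$r$ proposal --- then, being locked, it broadcasts $\PreendorsementsM$ carrying $\validPreendorsements_{q^\star}$, a preendorsement QC of round $\validRound_{q^\star}\ge\max_q\lockedRound_q$ (line~\ref{preendBc2}), which by Lemma~\ref{l:recv_within_phase} reaches every correct baker within the phase and updates its endorsable value/round via $\updateEndorsable$ (line~\ref{line:indirectEndoUpdate}) --- or $q^\star$ preendorses, and then either it does so on the second disjunct of line~\ref{propCheck1}, so the proposal itself carries a preendorsement QC of round $\ge\validRound_{q^\star}$ (propagated through that proposal and through the $\PreendorsementsM$ messages of the remaining locked bakers), or the proposal is for $v^\star$, in which case one checks, using that every correct locked round is $\le\validRound_{q^\star}$ and Lemma~\ref{l:oneMaj}, that \emph{every} correct baker preendorses $v^\star$; then a preendorsement QC for $v^\star$ forms at round~$r$, is rebroadcast by the endorsers (line~\ref{EndorseEnd} and the broadcast following it) and processed by all via $\updateEndorsable$, and moreover all correct bakers endorse $v^\star$ and decide at the end of round~$r$. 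Persistence then follows from the monotonicity of $\validRound$ and from the fact that a re-lock requires a current-round preendorsement QC, which by the last observation would already produce a decision.

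Given the claim, I would conclude as follows. If no correct baker decides at level~$\ell$ during rounds $r,\dots,r_c-1$, then at round~$r_c$ all hypotheses of Lemma~\ref{l:allSynchronizedHighestLockProposes} hold --- correct proposer, consistent endorsable state, $\Delta'(r_c)>\delta+2\rho$ --- so every correct baker decides at level~$\ell$ at the end of round~$r_c\le r+f+1$. If, instead, some correct baker decides at level~$\ell$ at an earlier round~$r^{\mathrm{dec}}$, on value~$u$, then $\ge f+1$ correct bakers endorsed and hence locked on $u$ at round~$r^{\mathrm{dec}}$ (line~\ref{EndorseLock}); by Lemma~\ref{l:preendorseBound} no value $\ne u$ gets a preendorsement QC at any round $\ge r^{\mathrm{dec}}$, so the consistent endorsable state of the claim fixes the value~$u$ from round~$r^{\mathrm{dec}}+1$ on and survives until $r_c$, at which point Lemma~\ref{l:allSynchronizedHighestLockProposes} again yields a decision at level~$\ell$ by the end of round~$r_c\le r+f+1$ (and $r_c$ cannot precede $r^{\mathrm{dec}}$, for then that same lemma would force a decision already at $r_c<r^{\mathrm{dec}}$).

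The hard part is the claim of the second paragraph, and within it the case where $q^\star$ preendorses the round-$r$ proposal rather than emitting a $\PreendorsementsM$ message: this is precisely where the preendorsement-retransmission mechanism and quorum intersection (Lemma~\ref{l:oneMaj}) are needed to make the maximal relevant locked round visible to all correct bakers inside a single round. The remaining steps --- staying synchronized, monotonicity of $\validRound$, and the round-robin count --- are bookkeeping.
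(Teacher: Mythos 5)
There is a genuine gap, and it sits exactly where you say the hard part is: the claim that by the start of round~$r+1$ every correct baker~$p$ has $\validRound_p\ge\lockedRound_q$ for all correct~$q$. Round~$r$'s proposer is not assumed correct, and the only mechanism that reveals a locked round is a \emph{refusal}: a locked baker broadcasts $\PreendorsementsM$ (line~\ref{preendBc2}) only when it does \emph{not} preendorse. A Byzantine proposer can prevent the baker holding the maximal locked round from ever refusing, by proposing that baker's own locked value (e.g.\ as a ``new'' value with $\var{eR}=0$, or with a stale low-round $\var{pQC}$): the first disjunct of line~\ref{propCheck1} then fires, the baker preendorses silently, and its high-round QC is never transmitted. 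Your sub-case ``the proposal is for $v^\star$'' tries to close this by claiming that then \emph{every} correct baker preendorses and decides, but that is false: a correct baker locked on a \emph{different} value at a round strictly between $\var{eR}$ and the maximal locked round fails both disjuncts of line~\ref{propCheck1} (Lemma~\ref{l:oneMaj} only rules out equal rounds), so it refuses, no quorum of correct preendorsements need form, and only \emph{its} (lower) locked round gets advertised. Concretely: one correct baker locked on $u_A$ at round~$10$, $f$ correct bakers locked on $u_B\ne u_A$ at round~$8$, the rest unlocked, and a Byzantine round-$r$ proposer proposing $u_A$ with $\var{eR}=0$ leaves everyone except the first baker with $\validRound=8<10$ at the start of round~$r+1$. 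A secondary flaw in the same claim is the choice of $q^\star$: maximizing $\validRound_{q^\star}$ does not control $\max_q\lockedRound_q$ (the proposal accepted by $q^\star$ under the second disjunct only carries a QC of round $>\lockedRound_{q^\star}$, not $\ge\validRound_{q^\star}$, and not $\ge$ the maximal locked round of some \emph{other} baker).

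Because the invariant is not available at round~$r+1$, your round budget is also too tight: you reserve only one correct proposer among rounds $r+1,\dots,r+f+1$. The argument actually needs two correct-proposer rounds --- one whose (correct, hence non-equivocating) proposal forces the holder of the maximal lock to refuse and broadcast its $\PreendorsementsM$, and a second one at which the now-informed proposer satisfies the hypothesis of Lemma~\ref{l:allSynchronizedHighestLockProposes}. This is precisely the paper's decomposition: it takes $p_j,p_k$ to be the first and second correct proposers among $p_0,\dots,p_{f+1}$ (rounds $r,\dots,r+f+1$), notes $j<k\le f+1$, lets round $r+j$ trigger the $\PreendorsementsM$ (and handles intervening re-locks via the QC rebroadcast at line~\ref{EndorseEnd}ff.), and concludes at round $r+k$. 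To repair your proof you would have to either adopt that two-proposer structure or find a different mechanism by which a Byzantine-led round is guaranteed to disseminate the maximal locked round; the latter does not exist in this protocol.
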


\begin{proof}
  We first remark that, after $\tau$, thanks to synchrony, a correct baker never
  skips a round, and in particular never skips its turn when it is time to
  propose. Indeed, when a baker calls again \fn{synchronize} and
  $\getNextRoundAndPhase$ to resynchronize at the end of a round, its local
  clock can be in advance with respect to the previous reading of its local
  clock by at most $2\rho$. As we assumed that $2\rho<\Delta'(1)$, the baker
  would still have remaining time to execute the \ProposeP phase of the next
  round.

  Let $p_0, p_1, \dots$ be the sequence of bakers in the order in
  which they propose starting with round~$r$. That is, $p_i$ is the
  proposer at round~$r+i$, for $i\geq 0$.
  Let $j,k$ be the indexes of the first and second correct bakers
  in this sequence.
  As there are at most $f$ Byzantine processes among
  $\set{p_0,\dots,p_k}\setminus \set{p_j}$, we have $j<k\leq f+1$.
  We show next that all correct bakers decide by the end of
  round~$r+k$.

  Suppose first that $p_j$ is such that $\validRound_{p_j}\geq
  \lockedRound_q$, for any correct baker~$q$. By
  Lemma~\ref{l:allSynchronizedHighestLockProposes}, all correct
  bakers decide at the end of round $r+j$.

  Suppose that there is a correct baker with a locked round higher than
  $\validRound_{p_j}$.
  Let $q$ be the baker with the highest locked round among all
  correct bakers.
  In the round at which $p_j$ proposes, that is, in round~$r+j$, $q$
  sends a preendorsement QC that justifies its locked round in the
  \PreendorseP phase (line~\ref{preendBc2}). By Lemma~\ref{l:recv_within_phase},
  this preendorsement QC is received by all correct bakers, who
  update in the \EndorseP phase of round $r+j+1$ their endorsable
  round to $q$'s locked round at line~\ref{line:indirectEndoUpdateRound}.

  If between rounds $r+j+1$ and $r+k-1$ no correct baker updates
  its locked round then the proposer $p_{k}$ will have at round
  $r+k$ that $\validRound_{p_k}\geq\lockedRound_{q}$, for any correct
  baker~$q$. Again, by
  Lemma~\ref{l:allSynchronizedHighestLockProposes}, we conclude that
  at the end of round $r+k$ all correct bakers decide.

  If instead there is a correct baker that updated its locked round
  before round $r+k$, then let $q$ be the baker which updates it
  last, at some round $r+j'$ with $j'<k$.
  When $q$ changes its locked round, $q$ has seen a prendorsement QC
  for round~$r+j'$. This QC is sent together with the $\EndorseM$
  message in the phase~\EndorseP, and therefore it will be received
  by all correct bakers at the beginning of the next phase~\ProposeP.
  Thus every correct baker, including $p_k$, sets its
  $\validRound$ to $r+j'$.
  Because $j'$ is maximal, no correct baker changes its locked
  round between rounds $r+j'+1$ and $r+k-1$. Therefore, at round
  $r+k$, for any baker~$q$, we have that $\lockedRound_{q} \leq
  r+j' = \validRound_{p_k}$. Again, by
  Lemma~\ref{l:allSynchronizedHighestLockProposes} we conclude that at
  the end of round $r+k$ all correct bakers decide.
  \qed
\end{proof}

\begin{theorem}
  \ourproto satisfies progress.
\end{theorem}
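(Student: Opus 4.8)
The plan is to prove progress by induction on the level, establishing that for every $\ell \geq 0$, eventually every correct process $p$ satisfies $\ell_p > \ell$ (equivalently $|\bar{v}_p| > \ell$); this is exactly the statement that the output sequences grow without bound. The base case $\ell = 0$ is immediate: when a correct process starts $\fn{runDRC}$ it executes $\fn{updateState}([\var{genesis}], \emptyset)$ at line~\ref{line:updateState-first} and thereby sets $\ell_p = 1$. For the inductive step I assume that after some time every correct process has $\ell_p \geq \ell$, fix a time $T > \tau$ after which this holds and the system is synchronous, and show that eventually every correct process has $\ell_p \geq \ell+1$.

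First I would dispatch the easy case: if some correct process ever reaches a level strictly greater than $\ell$, it holds a valid chain of length $> \ell$ and keeps serving it in answer to $\pullChain$ requests. Every correct process still at level $\ell$ invokes $\pullChain$ every $I$ time units, and after $\tau$ this primitive is reliable, so it eventually receives that longer chain, validates it with $\validChain$ (the committee keys needed for the check sit in earlier blocks it already holds), and returns it at line~\ref{l:ret}, thereby moving past level $\ell$. It thus remains to rule out the case in which, from $T$ on, every correct process stays at level $\ell$ forever; from this I derive a contradiction.

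In this surviving case, Theorem~\ref{thm:agreement} tells us that all correct processes agree on the content and predecessor hash of their common head (the block at level $\ell-1$), so the only possible disagreement concerns that head's round (and the other header fields). Since $\betterChain$ only ever replaces a head by one with a strictly smaller round, the correct processes' head rounds are non-increasing in time and hence stabilize; using the periodic $\pullChain$ calls together with $\betterChain$ I would argue that all correct processes eventually adopt a single common head at level $\ell-1$ — the one with the minimal round. Lemma~\ref{l:same_blockchain} then applies: since the correct processes stay at level $\ell$ with heads of the same round, they are eventually synchronized. Because correct processes never skip rounds after $\tau$ (round durations exceed the clock skew) and each round has a fixed finite duration, they advance through rounds together; as round durations are increasing and unbounded they eventually reach, while still synchronized, a round $r$ with $\Delta'(r) > \delta + 2\rho$. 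Lemma~\ref{l:allSynchronized} then forces all correct bakers to decide at level $\ell$ by the end of round $r+f+1$, hence to move to level $\ell+1$ — contradicting that they stay at level $\ell$ forever. This completes the induction.

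I expect the main obstacle to lie in the $\betterChain$ convergence step — proving that, once all correct processes are permanently stuck at level $\ell$, they all settle on one common head at level $\ell-1$. Two points need care. First, a correct process that has already seen a preendorsement QC at the current level ($\validRound_p > 0$) may, through the additional check in $\betterChain$ (lines~\ref{line:betterScore}--\ref{line:betterScore2}), decline a lower-round head whose predecessor hash is incompatible with its endorsable value; showing this cannot stall convergence calls for a quorum argument (a preendorsement QC for predecessor hash $h$ requires $f+1$ correct processes to have held $h$). Second, a Byzantine proposer at level $\ell-1$ could in principle have equivocated on the head's header so that two correct processes hold distinct heads sharing the same minimal round — a tie that $\betterChain$, which compares only rounds, does not break — and one must reconcile this with the already-established agreement property to see that such a persistent split cannot survive among processes that are bound to agree at level $\ell$. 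Making these points precise is where most of the work goes.
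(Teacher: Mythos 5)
Your overall skeleton matches the paper's: lagging processes catch up via the periodic $\pullChain$, the processes stuck at level~$\ell$ must converge to a common head, Lemma~\ref{l:same_blockchain} then yields synchronization, and Lemma~\ref{l:allSynchronized} together with unbounded round durations forces a decision, giving the contradiction. (Your induction-on-levels framing versus the paper's minimal-counterexample level is an immaterial difference, and your ``easy case'' correctly covers the paper's final step where a non-deciding process obtains the decision by pulling.) The genuine gap is precisely the step you flag as the main obstacle but do not carry out: convergence to a common head. Moreover, the sketch you give of it is inconsistent with the definition of $\betterChain$. You claim head rounds are non-increasing and hence all processes settle on the minimal-round head; but in the proposal branch $\betterChain$ returns true whenever $\validRound_p < \var{eR}$, \emph{irrespective} of the head's round (the round comparison at line~\ref{line:betterScore} is consulted only when the endorsable rounds are equal), so a process can adopt a head with a \emph{larger} round, and a process with a high endorsable round will refuse a lower-round head outright. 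Your proposed repair via a quorum argument on who held the predecessor hash does not address the real obstruction, which is temporal: between the sending and the receipt of the minimal-round chain, a process can keep raising its $\validRound$ from incoming consensus messages and thereby keep rejecting it.

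The paper closes this with an idea absent from your proposal: a quiescence argument. Since round durations grow without bound while $I$ and the pull round-trip time are fixed, there is eventually a window in which an entire $\pullChain$ exchange completes with no interleaved $\NewMessage$ state updates (consensus messages are emitted only at phase boundaries). In such a window the chain carrying the proposal with the \emph{highest} endorsable round is seen by all correct processes, each of which then has endorsable round at most that value, so $\betterChain$ makes all of them adopt that chain; the tie case is dispatched by quorum intersection (equal endorsable rounds force equal values, hence equal chains). Without an argument of this kind --- or some other mechanism guaranteeing that the $\validRound_p>0$ guard cannot permanently block adoption --- the contradiction in your inductive step does not go through.
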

\begin{proof}
  We reason by contradiction. Suppose first there is a level~$\ell\geq 1$ such
  that no correct process decides at~$\ell$. Clearly, $\ell$ is minimal with
  this property.

  We first show that eventually all correct processes are synchronized.
  As $\ell$ is minimal, we have that there is at least one
  correct process that has decided at~$\ell-1$.

  As processes invoke $\pullChain$ at regular intervals, all
  correct process will eventually be at level $\ell$ (that is, they
  will have decided at $\ell-1$).  %
  We show next that all correct processes have the same blockchain
  head.
  Let $p$ be a correct process that has its $\headQC_p$ for the block
  with the lowest round at level~$\ell-1$.  Process $p$ will
  eventually receive a $\pullChain$ request at some point after $\tau$
  and it will answer. If each correct process~$q$ has
  $\validRound_q=0$ at the time of the receipt of $p$'s answer, then
  every correct process accepts $p$'s branch, by the definition of
  $\betterChain$.
  Suppose however that there is a process $q$ that has
  $\validRound_q>0$ when it receives $p$'s answer. In this case
  consider a time when round durations are so big that $I$ and
  $\Delta$ are very small in comparison. More precisely, there is a
  time period when all $\pullChain$ requests and their answers happen during a
  period when correct processes update their states only in response
  to a $\NewChain$ event, but not in response to $\NewMessage$
  events. Such a period exists because regular messages are sent only
  at phase boundaries. This means that the chain ending with the
  proposal with the highest endorsable round~$r$ will be seen by all
  correct processes, and these processes will have their endorsable
  round smaller or equal to~$r$. They will update their blockchains to
  this chain (if they were on a different one). Note that if two
  processes have the same endorsable round then they also have the
  same blockchain; this can be seen by a simple application of the
  quorum intersection property.
  We have this obtained that eventually all correct processes have
  the same blockchain (head).
  We can therefore apply Lemma~\ref{l:same_blockchain} to obtain
  that there is a time after $\tau$ at which all correct processes are
  synchronized.

  Now, recall that the function $\Delta'$ has the property that there is a round $r$
  such that $\Delta'(r)>\delta+2\rho$. As $\Delta'$ is increasing, this property
  holds for all subsequent rounds as well. And, given that all processes are
  synchronized from some time on, as proved in the previous paragraph, we obtain
  that the hypothesis of Lemma~\ref{l:allSynchronized} is satisfied. Therefore
  all correct processes decide at~$\ell$, which contradicts the assumption that
  no correct process decides at~$\ell$. In other words, we have proved that, for
  any level~$\ell$, there is at least one correct process that decides
  at~$\ell$.

  Finally, we show that for any level~$\ell$, any correct process eventually
  decides at~$\ell$. Suppose that there is a correct process~$p$ that does
  not decide at some level~$\ell\geq 1$. From the first part of the proof we
  obtain that there is at least one other correct process~$q$ that eventually
  decides at $\ell$.
  Process $q$ will eventually receive $p$'s pull request, will
  reply, and $p$ will therefore receive an endorsement QC for
  level~$\ell$ which enables it to decide at~$\ell$. This contradicts
  the assumption, and allows us to conclude.
  \qed
\end{proof}

\end{document}